\documentclass[sigconf]{acmart}
\usepackage{multirow}
\usepackage{subfigure}
\usepackage[linesnumbered,ruled, vlined]{algorithm2e}
\usepackage{amsmath, amsthm,amsfonts, bm}
\usepackage{array}
\usepackage{balance}
\usepackage{enumitem}
\usepackage{soul}
\usepackage{marginnote}
\usepackage{tablefootnote}
\usepackage{threeparttable}
\usepackage{footnote}
\usepackage{xcolor}
\makesavenoteenv{tabular}
\makesavenoteenv{table}
\usepackage{url}

\newcolumntype{L}[1]{>{\raggedright\let\newline\\\arraybackslash\hspace{0pt}}m{#1}}
\newcolumntype{C}[1]{>{\centering\let\newline\\\arraybackslash\hspace{0pt}}m{#1}}
\newcolumntype{R}[1]{>{\raggedleft\let\newline\\\arraybackslash\hspace{0pt}}m{#1}}

\newtheorem{definition}{Definition}[section]
\newtheorem{problem}{Problem}[section]

\AtBeginDocument{%
  \providecommand\BibTeX{{%
    \normalfont B\kern-0.5em{\scshape i\kern-0.25em b}\kern-0.8em\TeX}}}

\newcommand{\kaixin}{}
\newcommand{\cheng}{}

\newcommand{\yui}{}

\newcommand{\yuiR}{}

\newcommand{\Revise}{\color{black}}

\begin{document}
\sloppy

\settopmatter{printacmref=false} % Removes citation information below abstract
\renewcommand\footnotetextcopyrightpermission[1]{} % removes footnote with conference information in first column

\title{Maximal Biclique Enumeration with Improved Worst-Case Time Complexity Guarantee: A Partition-Oriented Strategy}

% \pagenumbering{roman}
% \twocolumn{
% \input{cover-letter/thank}
% \input{cover-letter/meta}
% \input{cover-letter/review1}
% \input{cover-letter/review2}
% \input{cover-letter/review3}
% \clearpage
% \nobalance
% }

\author{Kaixin Wang}
\affiliation{%
  \institution{College of Computer Science, Beijing University of Technology, China}
  \country{}
}
\email{kaixin.wang@bjut.edu.cn}

\author{Kaiqiang Yu}
\authornote{Kaiqiang Yu is the corresponding author.}
\affiliation{%
  \institution{State Key Laboratory of Novel Software Technology, Nanjing University, China}
  \country{}
}
\email{kaiqiang.yu@nju.edu.cn}

\author{Cheng Long}
% \authornotemark[1]
\affiliation{%
  \institution{College of Computing and Data Science, Nanyang Technological University, Singapore}
  \country{}
}
\email{c.long@ntu.edu.sg}

\begin{abstract}
    The maximal biclique enumeration problem in bipartite graphs is fundamental and has numerous applications in E-commerce and transaction networks, particularly in areas such as fraud detection and anomaly behavior identification. Most existing studies adopt a branch-and-bound framework, which recursively expands a partial biclique with a vertex until no further vertices can be added. Equipped with a basic pivot selection strategy, all state-of-the-art methods have a worst-case time complexity no better than {\Revise $O(m\cdot (\sqrt{2})^n)$}, where $m$ and $n$ are the number of edges and vertices in the graph, respectively. In this paper, we introduce a new branch-and-bound (BB) algorithm \texttt{IPS}. In \texttt{IPS}, we relax the strict stopping criterion of existing methods by allowing termination when all maximal bicliques within the current branch can be outputted in the time proportional to the number of maximal bicliques inside, reducing the total number of branches required. Second, to fully unleash the power of the new termination condition, we propose an improved pivot selection strategy, which well aligns with the new termination condition to achieve better theoretical and practical performance. 
    Formally, \texttt{IPS} improves the worst-case time complexity to $O(m\cdot \alpha ^n + n\cdot \beta)$, where $\alpha (\approx 1.3954)$ is the largest positive root of $x^4-2x-1=0$ and $\beta$ represents the number of maximal bicliques in the graph, respectively. This result surpasses that of all existing algorithms given that $\alpha$ is strictly smaller than $\sqrt{2}$ and $\beta$ is at most {\Revise $(\sqrt{2})^n-2$} theoretically.
    Furthermore, we apply an inclusion-exclusion-based framework to boost the performance of \texttt{IPS}, improving the worst-case time complexity to $O(n\cdot \gamma^2\cdot\alpha^\gamma + \gamma\cdot \beta)$ for large sparse graphs ($\gamma$ is a parameter satisfying $\gamma \ll n$ for sparse graphs). 
    Finally, we conduct extensive experiments on 15 real datasets and the results demonstrate that our algorithms can run several times faster than the state-of-the-art algorithms on real datasets.
\end{abstract}

% The maximal biclique enumeration problem in bipartite graphs is fundamental and has numerous applications in E-commerce and transaction networks, particularly in areas such as fraud detection and anomaly behavior identification. Most existing studies adopt a branch-and-bound framework, which recursively expands a partial biclique with a vertex until specific termination criteria are met. Equipped with a basic pivot selection strategy, all state-of-the-art methods have a worst-case time complexity no better than $O(m\cdot 2^{n/2})$ (or $O(m\cdot 1.4142^n)$), where $m$ and $n$ are the number of edges and vertices in the graph, respectively. We propose several techniques to boost the efficiency of the algorithm. Finally, we conduct extensive experiments on both real and synthetic datasets, demonstrating that our algorithms can achieve up to an order of magnitude faster than the state-of-the-art on real datasets.

\begin{CCSXML}
<ccs2012>
   <concept><concept_id>10002950.10003624.10003633.10010917</concept_id>
       <concept_desc>Mathematics of computing~Graph algorithms</concept_desc>
       <concept_significance>500</concept_significance>
       </concept>
 </ccs2012>
\end{CCSXML}

\ccsdesc[500]{Mathematics of computing~Graph algorithms}

\keywords{Graph mining; branch-and-bound; maximal biclique enumeration}

\maketitle
\pagenumbering{arabic}

\section{Introduction}
\label{sec:intro}

{
}

A bipartite graph is a special type of graph where all vertices can be divided into two disjoint sets such that every edge connects a vertex in one set with another vertex in the other set. 
Bipartite graphs are widely used to model interactions between two different classes of objects in many real-world systems, such as purchasing networks between buyers and sellers \cite{wang2006unifying}, collaboration networks between authors and publications \cite{ley2002dblp, zhou2007co}, and recommendation networks between users and items \cite{li2013recommendation}. Given a bipartite graph $G=(L\cup R, E)$, a biclique is a complete subgraph in which every vertex on one side is connected to all vertices on the other side. Among all bicliques, maximal bicliques (MBCs) are of particular interest 
% because they capture the largest dense interaction patterns 
and play a central role in applications including community analysis \cite{lehmann2008biclique, liu2006efficient}, recommendation systems \cite{maier2021biclique, maier2022bipartite}, fraud detection \cite{yu2022efficient}, and graph neural network inference acceleration \cite{jia2020redundancy}.

{\Revise 
\smallskip\noindent\textbf{Motivations and Applications.}
Although the number of maximal bicliques can be exponentially large in the worst case, complete enumeration remains essential for many real-world data analysis tasks, as some examples illustrate below. 

\smallskip\noindent\underline{\textit{Community Discovery and Fraud Pattern Mining.}} 
In many applications, maximal bicliques serve as fundamental structural patterns for discovering dense and meaningful interactions. 
A common modeling paradigm in such applications is to construct a bipartite graph between two types of entities (e.g., users and attributes in social systems, or entities and activities in transaction networks), where a maximal biclique corresponds to a largest group of objects on one side that are tightly associated with the same group of objects on the other side. 
These structures naturally capture coherent behavioral patterns, such as communities of users sharing common attributes in social computing, or groups of entities involved in the same set of suspicious activities in fraud and anomaly detection. 
In both cases, the analytical objective is not merely to identify a single dominant pattern but to uncover all maximal such structures, since each may represent a distinct, meaningful interaction pattern. 
% Partial enumeration may therefore overlook critical structures and lead to incorrect or incomplete analysis, making complete maximal biclique enumeration indispensable in practice.

\smallskip\noindent\underline{\textit{Redundant Message Passing in Graph Neural Networks.}} 
Maximal biclique mining also arises in graph neural networks during message passing between consecutive layers. 
Here, a bipartite graph is formed between the source nodes in one layer and the target nodes in the next, where edges represent the dependencies of the message. 
A maximal biclique corresponds to a largest set of target nodes that receive messages from the same set of source nodes, exposing highly redundant computation patterns. 
Identifying such structures enables factorizing message propagation, reducing memory access, and accelerating training and inference while preserving model accuracy. 
Complete maximal biclique enumeration ensures that all compressible interaction patterns are discovered, providing systematic opportunities for optimizing large-scale GNN systems.

}

\smallskip\noindent\textbf{Existing Methods and Challenges.} 
Quite a few algorithms have been proposed for maximal biclique enumeration \cite{dai2023hereditary, chen2023maximal, abidi2020pivot, chen2022efficient}. These approaches all adopt a similar \emph{branch-and-bound} (BB) framework, which involves recursively dividing the problem of enumerating all MBCs into several  {\yuiR sub-problems} through branching operations. Specifically, each problem corresponds to a search space (or a branch), represented by a triplet $B=(S,C,X)$, where $S$ induces a partial biclique found so far, $C$ contains the vertices that can be added to $S$ to form a larger biclique and $X$ contains the vertices that have already been considered and is used for checking duplications. All MBCs in an input graph $G=(L\cup R, E)$ can be enumerated via the above BB framework by solving the initial search space $B=(\emptyset, L\cup R,\emptyset)$.
Given a branch $B=(S,C,X)$, a set of sub-branches rooted at $B$ would be created, where each sub-branch expands $S$ by adding a vertex $v$ from one side of $C$ to $S$ and updates $C$ and $X$ by removing from $C$ and $X$ those vertices on the other side that are not connected with $v$, respectively. The recursive procedure would terminate when {\kaixin the branch can be \emph{trivially solvable}, i.e., $C$ is empty}, at which point $S$ is reported as a MBC if $X$ is also empty; otherwise, $S$ is not a MBC. 

We term the {\cheng above BB framework, which conducts branching until it reaches \underline{t}rivially \underline{s}olvable branches,}
% branching-until-\underline{t}rivially-\underline{s}olvable-based BB framework 
% for maximal biclique enumeration 
as \texttt{TS}. {\yuiR We note} that the theoretical time complexity and empirical runtime of \texttt{TS}-based algorithms are dominated by the branching process, as the number of branches can become exponential in the worst case~\cite{dai2023hereditary, chen2023maximal, abidi2020pivot, chen2022efficient}. Therefore, existing studies of \texttt{TS} target two directions to improve performance by reducing the number of branches created. 
One direction is based on the observation that all MBCs in a bipartite graph $G=(L\cup R, E)$ can be enumerated only by visiting some subsets {\cheng in} $L$'s powerset \cite{chen2023maximal, zhang2014finding}. The rationale is that given a vertex set $S_L\in P_L$ where $P_L$ is the powerset of $L$ and let $S_R$ contain all common neighbors of the vertices in $S_L$. Then $S_L\cup S_R$ induces an MBC if $S_L$ also contains all common neighbors of the vertices in $S_R$. Therefore, it is common practice to choose the vertex set with the minimum number of vertices inside as $L$ and {\yuiR visit its}  powerset recursively \cite{chen2023maximal, zhang2014finding}. 
The other direction is to prune more branches effectively by \emph{pivoting} strategy \cite{abidi2020pivot, dai2023hereditary, chen2022efficient}. The main idea is to first choose a vertex from $C\cup X$ as the pivot, and then prune those branches that would not {\yuiR contain} any MBC inside based on some topological information of the graph, such as the containment relationship \cite{abidi2020pivot, chen2022efficient} or the neighborhood information \cite{dai2023hereditary}. For example, given a branch $B=(S,C,X)$, the state-of-the-art pivoting strategy first chooses a vertex $v\in C\cup X$ as the pivot and the sub-branches produced by including (1) the vertex at the same side as that of $v$ and (2) $v$'s neighbors at the other side can be safely pruned \cite{dai2023hereditary}. To prune the maximum number of branches, the common practice is to choose the vertex with the maximum degree from $C\cup X$ as the pivot. By {\cheng adopting} the above techniques, existing algorithms achieve the state-of-the-art worst-case time complexity of $O(m\cdot (\sqrt{2})^n)$, where $m$ and $n$ denotes the number of edges and vertices of the graph, respectively. 
{\cheng We note that it is challenging to further improve beyond this worst-case time complexity with this framework since a graph may contain up to $(\sqrt{2})^n-2$ MBCs, which implies that the total number of trivially solvable branches within \texttt{TS} could be $O((\sqrt{2})^n)$ in the worst case.
}

% Although 
% {\cheng While} the above optimization techniques have been proved to be effective in the efficiency improvement of \texttt{TS}, it remains challenging to further improve the current worst-case time complexity of $O^*((\sqrt{2})^n)$. 
% This is because a graph may contain up to $(\sqrt{2})^n-2$ MBCs, which implies that the total number of trivially solvable branches within \texttt{TS} is inherently bounded by $O((\sqrt{2})^n)$ in the worst case.}

\smallskip\noindent\textbf{Our Methods.}
% To solve the aforementioned issue, 
% in this paper, we propose to 
% {\cheng In this paper, we propose a new framework which improves }
% improve the framework by reducing the time costs for the branching procedure in both theory and practice. 
We observe that the termination condition of \texttt{TS}, which requires the branch to be trivially solvable, may be overly strict and could lead to significant computational overhead during branching. In fact, it would be sufficient if the branch can be solvable in optimal time - referred to as being \emph{optimally solvable}. Specifically, a problem instance with input graph $G$ is considered optimally solvable if it can be solved in $O(scan(G) + scan(\mathcal{B}_G))$ time, where $O(scan(G))$ denotes the time required to load the graph $G$, $\mathcal{B}_G$ represents the set of all MBCs in $G$, and $O(scan(\mathcal{B}_G))$ corresponds to the time needed to output all MBCs in $G$. 
Motivated by the above observation, 
{\cheng we propose a new BB framework termed \texttt{OS}, 
which conducts branching until it reaches \underline{o}ptimally \underline{s}olvable branches}.
% termed the branching-until-\underline{o}ptimally-\underline{s}olvable-based framework \texttt{OS}, would recursively divide each branch into multiple sub-branches until they can be solved optimally. 
Given the fact that an optimally solvable branch can contain multiple MBCs, while a trivially solvable branch can only contain at most one MBC, 
% the total number of optimally solvable branches is typically smaller than that of trivially solvable ones, 
{\kaixin it allows \texttt{OS} to (1) produce much less total number of branches during enumeration, making it possible to improve the worst-case time complexity theoretically and (2) reduce the computational overhead across the branching process, making it more efficient practically compared to the \texttt{TS} framework.}
To develop a practical algorithm based on the \texttt{OS} framework, two key questions must be addressed: (1) what types of graphs satisfy the optimally-solvable condition? (2) how should the BB process be designed to ensure it fully unleashes the power of the optimally solvable condition?

To address the first question, we observe that there exists a kind of bipartite subgraphs known as 2-biplexes~\footnote{A bipartite graph $H$ is a $k$-biplex if each vertex on one side has at most $k$ non-neighbors on the other side.}, which allows direct construction and enumeration of all MBCs inside in nearly optimal time, without requiring further branching. The core idea is to convert the identified 2-biplex $H$ into its complementary bipartite graph, $\overline{H}$~\footnote{Given a bipartite graph $H = (L \cup R, E)$, its complementary bipartite graph $\overline{H} = (L \cup R, \overline{E})$ shares the same vertex sets as $H$, with an edge between $u \in L$ and $v \in R$ if and only if $(u, v) \notin E$, i.e., $\overline{E} = (L \times R) \setminus E$.}. Since $\overline{H}$ contains only isolated vertices, simple paths, and simple cycles of even length (since its maximum degree is at most 2), each MBC in $H$ corresponds to a maximal independent set in $\overline{H}$. These independent sets can be enumerated iteratively, ensuring that no additional computational cost is incurred during enumeration. {\kaixin We note that it still remains an open question whether the termination from 2-biplexes can be further relaxed to other bipartite subgraphs (e.g., $k$-biplexes with $k\ge 3$) that can be optimally solved.}

To address the second question, we aim to integrate the new termination condition into the BB framework. An intuitive idea is to directly build upon the state-of-the-art BB algorithms. Unlike the existing studies that only terminate the recursive process when both vertex sets $C$ and $X$ are empty, the new design, which follows \texttt{OS} framework, would terminate the process as soon as the vertices in $S\cup C$ form a 2-biplex and $X$ becomes empty. This ensures that every MBC identified within the 2-biplex induced by $S\cup C$ is also a MBC in the whole input graph $G$.
%
% However, despite the efficiency gained through this new termination condition, 
{\cheng Nevertheless, the above design may still generate} much more branches than necessary. The reason is that the pivoting strategy in \cite{dai2023hereditary} is not well-aligned with the early-termination technique, especially when the vertices that are chosen as the pivots have only 1 or 2 non-neighbor(s) on the other side. These vertices may {\cheng be} possibly contained in a 2-biplex, and if a different vertex were chosen as the pivot, the termination condition could have been applied earlier to the produced branches.
Motivated by this, we propose an improved pivot strategy based on a new partition-based framework to solve the above issue. The idea is to \emph{skip} some vertices in $C$ and choose the pivot vertex from the remaining vertices. We determine the partition in each branch $B$ such that when no pivot can be selected in $B$, $B$ is optimally-solvable and the new termination condition can be naturally applied. We present an algorithm called \texttt{IPS} with this new pivoting strategy. Formally, we prove that \texttt{IPS} would produce at most $O(\alpha^n)$ branches, where $\alpha\approx 1.3954$ is the largest positive root of $x^4-2x-1=0$ and enumerate all MBCs in $O(m\cdot \alpha^n+n\cdot \beta)$ time under the worst case, where $\beta$ represents the number of MBCs in the graph, i.e., $\beta = |\mathcal{B}_G|$. It is strictly better than that of the existing algorithms since both $O(m\cdot \alpha^n)$ and $O(n\cdot \beta)$ are bounded by $O(m\cdot (\sqrt{2})^n)$ given the fact that $\beta$ is at most $(\sqrt{2})^n-2$ for any graph with $n$ vertices.

In addition, we adapt an existing technique, called \emph{inclusion-exclusion} (\texttt{IE}), for boosting the efficiency and scalability of the BB algorithms including \texttt{IPS}. The idea is to construct multiple problem instances each on a smaller subgraph, which has been widely used for enumerating and finding subgraph structures \cite{chen2021efficient, wang2022listing, yu2023maximum}.

\smallskip
\noindent\textbf{Contributions.} We summarize our contributions as follows.

\begin{itemize}[leftmargin=*]
    \item We propose a new branch-and-bound algorithm called \texttt{IPS} for maximal biclique enumeration problem, which is based on our newly developed termination condition and the new pivoting strategy. \texttt{IPS} has a strictly better worst-case time complexity than that of the state-of-the-arts, i.e., the former is $O(m\cdot \alpha^n+n\cdot \beta)$ and the latter is $O(m\cdot (\sqrt{2})^n)$. (Section~\ref{sec:BB-v2})
    % , where $\beta$ is the number of maximal bicliques in the graph.  
    
    \item We further introduce an inclusion-exclusion framework (\texttt{IE}) to boost the performance of \texttt{IPS}. When \texttt{IE} is applied to \texttt{IPS}, the worst-case time complexity becomes $O(n\cdot\gamma^2\cdot \alpha^{\gamma}+\gamma\cdot \beta)$, where $\gamma$ is a parameter satisfying $\gamma\ll n$ for sparse graphs. This is better than that of \texttt{IPS} when the graph satisfies the condition $n\ge \gamma+6.01\ln \gamma$. 
    % We note that the condition is satisfied by all existing real-world graphs due to their sparse nature. 
    (Section~\ref{sec:boost})
    \item We conduct extensive experiments on both real and synthetic datasets to evaluate the effectiveness of our proposed techniques, and the results show that our algorithm can consistently and largely outperform the state-of-the-art algorithms. (Section~\ref{sec:exp})
\end{itemize}

The rest of the paper is organized as follows. Section~\ref{sec:problem} reviews the problem and presents some preliminaries. Section~\ref{sec:related} reviews the related work and Section~\ref{sec:conclusion} concludes the paper.
% \clearpage
\section{Problem and Preliminaries}
\label{sec:problem}

Let $G=(L\cup R, E)$ be an unweighted and undirected bipartite graph, where $L$ and $R$ are two disjoint vertex sets and $E$ is the edge set. Given $V_L\subseteq L$ and $V_R \subseteq R$, we use $H=G[V_L \cup V_R]$ to denote the subgraph of $G$ induced by the vertices in $V_L\cup V_R$. For a subgraph $H$ of a bipartite graph $G$, we use $L(H)$, $R(H)$ and $E(H)$ to denote the left side, right side and set of edges {\cheng of $H$}, respectively. We note that all subgraphs considered in this paper are induced subgraphs. 
% We use $H$ or $X\cup Y$ as a shorthand of the subgraph $H = G[X\cup Y]$, and use $(|L(H)|,|R(H)|)$ to denote the size of $H$.
Given a vertex $u\in L$ and a vertex set $V_R\subseteq R$, we use $N(u, V_R)$ (resp. $\overline{N}(u, V_R)$) to denote the set of neighbors (resp. non-neighbors) of $u$ in $V_R$. We define $d(u, V_R) = |N(u, V_R)|$ and $\overline{d}(u, V_R) = |\overline{N}(u, V_R)|$. 
% $V'\subseteq (L\cup R)$ and a vertex $u$ from one side of $G$, we use $N(u, V')$ (resp. $\overline{N}(u, V')$) to denote the set of neighbours (resp. non-neighbours) of $u$ from the other side. We define $d(u, V') = |N(u, V')|$ and $\overline{d}(u, V') = |\overline{N}(u, V')|$. 
Let $V_L\subseteq L$ be a set of vertices, we use $N(V_L, V_R)$ to denote the set of common neighbors of $V_L$ in $V_R$, i.e., $N(V_L, V_R)=\cap_{u\in V_L} N(u,V_R)$. 
We have symmetric definitions for the vertices in $R$.

\begin{definition}[Biclique \cite{bondy1976graph}]
    Given a graph $G=(L\cup R, E)$, a subgraph $H = G[V_L\cup V_R]$ of $G$ is said to be a biclique if $\forall u\in V_L$ and $\forall v\in V_R$, there is an edge $(u, v)\in E$.
\end{definition}

Given a bipartite graph $G$, a biclique $H$ is said to be maximal if there is no other biclique $H'$ that contains $H$. Among all maximal bicliques of $G$, those maximal bicliques with the more edges are of more interests since they carry more useful information in fraud detection or community search tasks \cite{sim2009mining}. Therefore, in this paper, we target the maximal biclique enumeration problem but also consider some additional constraints that can be specified by the {\cheng users}. In specific, we consider two size constraints $\tau_L$ and $\tau_R$ such that for each returned maximal biclique $H$, it satisfies $|L(H)|\ge \tau_L$ and $|R(H)|\ge \tau_R$. These constraints help filter out those skewed bicliques, i.e., the number of vertices on one side is significantly larger than that on the other side, and those small-size bicliques. {\cheng We} formally present the problem definition as follows.

\begin{problem}[Maximal Biclique Enumeration \cite{dai2023hereditary, chen2023maximal}]
    Given a bipartite graph $G=(L\cup R, E)$ and two integers $\tau_L\ge 1$ and $\tau_R \ge 1$, the maximal biclique enumeration problem aims to return a set of maximal bicliques such that each maximal biclique $H$ inside satisfies $|L(H)|\ge \tau_L$ and $|R(H)|\ge \tau_R$. 
\end{problem}

% \smallskip\noindent
% \textbf{NP-hardness}. The Top-$k$ Maximum Biclique Search problem is NP-hard since it can reduce to another NP-Hard problem, namely Maximum Biclique Search (Max-BCS in short) problem \cite{lyu2020maximum}, in polynomial time, where the Max-BCS problem aims to find the maximal biclique with the maximum number of edges inside. The proof of the NP-hardness of the Max-BCS problem can be found in \cite{peeters2003maximum}. The reduction can be done in $O(1)$ by setting $k=\tau_L=\tau_R=1$. 

\smallskip\noindent
\textbf{NP-hardness.} The problem is NP-hard since the problem reduces to an existing NP-hard problem, namely the maximum biclique search problem (i.e., the problem aims to find a biclique $H^*$ with the most number of edges inside) \cite{lyu2022maximum}, if $\tau_L$ and $\tau_R$ are initialized to be $|L(H^*)|$ and $|R(H^*)|$, respectively.

% As discussed above, by setting $k=1$, the Topk-BCS problem would be equivalent to the Maximum Biclique Search problem. Besides, by setting $k$ a extremely large number,
% % e.g., $k\ge 2^{n/2}$ ($n$ is the number of vertices in $G$), 
% the Topk-BCS problem would return all maximal bicliques in $G$, which would be equivalent to the Maximal Biclique Enumeration (MBCE in short) problem \cite{abidi2020pivot, chen2022efficient, dai2023hereditary}. 
% %
% Therefore, in the Section~\ref{sec:MBCS}, we mainly focus on the former case, i.e., $k=1$. The algorithm can be naturally extended to arbitrary $k$ with some modifications, which we will present the dicussion in Section~\ref{sec:MBCE}.

\section{The Branch-and-Bound Algorithms}
\label{sec:BB-v2}

\subsection{Overview of the Branch-and-bound Algorithms}
\label{subsec:overview}

Many algorithms have been proposed for maximal biclique enumeration problem in the literature \cite{dai2023hereditary, abidi2020pivot, chen2022efficient, chen2023maximal}. Most of them \cite{abidi2020pivot, chen2023maximal, dai2023hereditary} adopt a conventional and widely-used branching strategy that we call Bron-Kerbosch (BK) branching \cite{bron1973algorithm}.
Specifically, it recursively partitions the search space (i.e., the set of all maximal bicliques in $G$) into several sub-spaces via branching until some stopping criterion are satisfied, e.g., $C$ and $X$ become empty sets \cite{dai2023hereditary,chen2021efficient}. Each search space, which corresponds to a branch $B$, can be represented as a triplet $B=(S,C,X)$, where 
\begin{itemize}[leftmargin=*]
    \item \textbf{Set $S$} induces a partial biclique found so far;
    \item \textbf{Set $C$} contains all candidate vertices that can be included to $S$  within the search space {\cheng for expanding the biclique}; {\cheng and}
    \item \textbf{Set $X$} contains all exclusion vertices that must be excluded from $S$  within the search space.
\end{itemize}
We further denote the left side and right side of $S$ (resp. $C$ and $X$) by $S_L$ (resp. $C_L$ and $X_L$) and $S_R$ (resp. $C_R$ and $X_R$), respectively.

\begin{algorithm}[t]
\small
\caption{The general framework: \texttt{BasicBB}}
\label{alg:basic-pivot}
\KwIn{A bipartite graph $G=(L\cup R,E)$, $\tau_L$ and $\tau_R$.}
\KwOut{All maximal bicliques in $G$ that satisfy the size constraints.}
\SetKwFunction{List}{\textsf{BB\_Rec}}
\SetKwProg{Fn}{Procedure}{}{}
% Let $v_1, \cdots, v_n$ be in the degeneracy ordering of $G$\;
% \For{each $v_i\in V(G)$}{
% $P_i\gets N(v_i)\cap \{v_1, \cdots, v_{i-1}\}$\;
% $X_i\gets N(v_i)\cap \{v_{i+1}, \cdots, v_n\}$\;
% \List{$\{v_i\}, P_i, X_i$}\;
% }
% $H^*\gets G[\emptyset]$ \tcp*[l]{Global variable}
\List{$\emptyset, L\cup R, \emptyset$}\;
\Fn{\List{$S, C, X$}} {
    \tcc{Termination}
    \If{$C$ and $X$ satisfy any stopping criterion}{
        \textbf{Output} all maximal bicliques in $G[S\cup C]$; 
        \Return\;
    }
    \tcc{Pruning}
    \If{any of pruning rule can be applied}{\Return\;}
    \tcc{Branching with Pivoting}
    Choose a pivot vertex $v^*$ from $C\cup X$\;
    Create a set of branches $B_i$ based on the pivot vertex $v^*$\;
    \For{each created branch $B_i$}{
    
    \List{$S_i, C_i, X_i$}\;
    }
    % \tcc{Branching with basic pivot strategy}
    % $v^*\gets$ a vertex in $C\cup X$ with the min. value of $\overline{d}(\cdot, C)$\;
    % \uIf{$v^*\in X$}{Create $\overline{d}(v^*, C)$ branches $B_1, B_2, \cdots$ over the vertices in $\overline{N}(v^*, C)$ based on Eq.~(\ref{eq:branching})\;}
    % \Else{Create $\overline{d}(v^*, C) + 1$ branches $B_1, B_2, \cdots$ over the vertices in $\{v^*\}\cup \overline{N}(v^*, C)$ based on Eq.~(\ref{eq:branching})\;}
    % \For{each created branch $B_i$}{\List{$S_i, C_i, X_i$}\;}
}
\end{algorithm}

\begin{figure}[t]
% \vspace{-2mm}
\centering
% \subfigure[A 3-biplex graph $G$]{
    \includegraphics[width=0.38\textwidth]{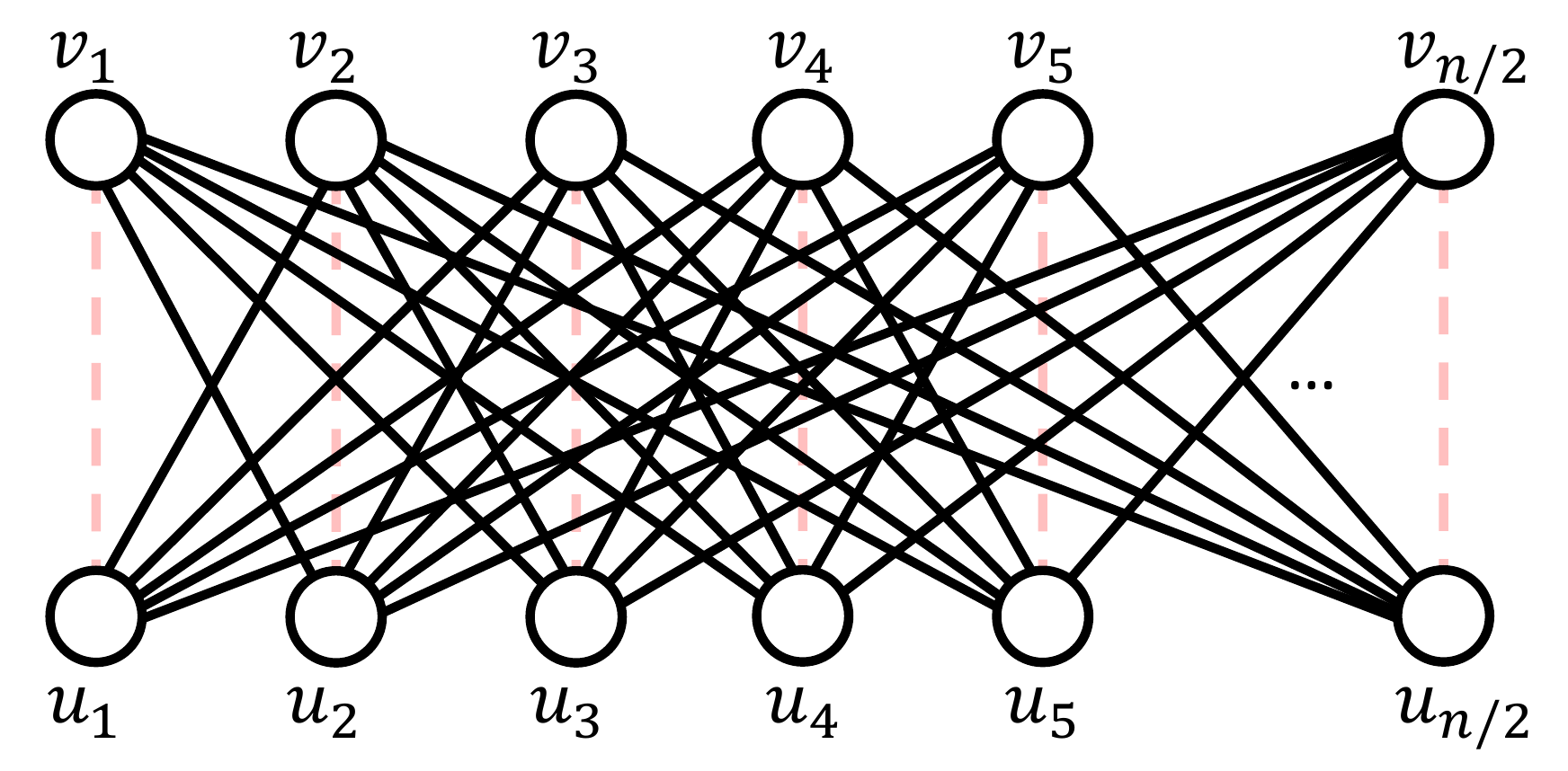}
% \vspace{-3mm}
    % }
% \subfigure[The complementary of the candidate graph after branching on $u_1$ with another pivot strategy, e.g., $u_1$ is the pivot vertex.]{
% \label{subfig:branch2}
%     \includegraphics[width=0.255\textwidth]{example/branch2.pdf}}
\caption{\Revise{An example graph with $n$ vertices ($n$ is even), in which each $u_i$ is disconnected only from $v_i$ for $i=1,2,\ldots,n/2$, as indicated by the dashed edges. In total, the graph contains exactly $(\sqrt{2})^n - 2$ maximal bicliques. To see this, observe that any non-empty subset of vertices chosen from $U$ uniquely determines a maximal biclique by taking its common neighbors in $V$. Since $|U|=|V|=n/2$, the number of such choices is $2^{n/2}-2$ (equiv.\ $(\sqrt{2})^n - 2$), excluding the empty set and the full set so that both sides of the biclique are non-empty.
}}
% \vspace{-2mm}
\label{fig:existing}
\end{figure}

The recursive process starts from the initial branch $B=(S,C,X)$ with $S=\emptyset$, $C=L\cup R$ and $X=\emptyset$. Consider the branching step at a branch $B=(S,C,X)$. Let $\langle v_1, v_2, \cdots, v_{|C|} \rangle$ be an ordering of the vertices in $C$. Then the branching step would create $|C|$ sub-branches, where the $i$-th branch, denoted by $B_i$, includes $v_i$ to $S$ and excludes $v_1, v_2, \cdots v_{i-1}$ from $S$. Formally, for $1\le i\le |C|$, the sub-branch $B_i$ can be represented by
\begin{equation}
\label{eq:branching}
    S_i = S\cup \{v_i\}\quad C_i = C\setminus\{v_1, \cdots, v_i\}\quad X_i = X\cup \{v_1, \cdots, v_{i-1}\}
\end{equation}
We note that $C_i$ and $X_i$ can be further refined by removing those vertices that are not $v_i$'s neighbors on its opposite side since they cannot form bicliques with $v_i$. The above branching step corresponds to a binary search process, where one searches those maximal bicliques including $v_i$ by adding $v_i$ from $C$ to $S$ and the other searches those maximal bicliques excluding $v_i$ by removing $v_i$ from $C$ to $X$. 
% It first partitions the branch $B$ into two branches based on $v_1$ - one including $v_1$ to $S$ (this is $B_1$) and the other excluding $v_1$ from $S$. It further partitions the latter into two branches based on $v_2$ - one including $v_2$ (this is $B_2$) and the other excluding $v_2$. It continues the process until the last branch $B_{|C|}$, which includes $v_{|C|}$ and excluding $v_1, v_2, \cdots, v_{|C|-1}$, is created. 
Consequently, the total number of branches can grow to $O(2^n)$ in the worst case for a graph $G$ with $n$ vertices. To reduce the number of explored branches during the search process, existing branch-and-bound algorithms for maximal biclique enumeration problem adopt different design strategies, which mainly differ in the following aspects.
% As a result, there would exist $O(2^n)$ branches under the worst case in a graph $G$ with $n$ vertices. 
%
% To reduce the number of branches, various branch-and-bound algorithms for maximal biclique enumeration differ from each other in the following aspects. 
\begin{itemize}[leftmargin=*]
    \item \textbf{Stopping Criterion} terminates the recursive procedure and outputs all maximal bicliques within the branch (if any), i.e., those within the subgraph induced by $G[S\cup C]$;
    
    \item \textbf{Pivoting Strategy} {\Revise selects a vertex from one side of $C\cup X$ as the pivot and produces sub-branches only on the pivot itself and its non-neighbors on the other side, which ensures that all maximal bicliques are enumerated exactly once;} 
    % (i.e., mutually exclusive collectively exhaustive); }
    % determines which vertex should be selected from $C\cup X$ as the pivot, and the selection further determines how the branches are produced;    

    \item \textbf{Pruning Technique} eliminates the unpromising branch, i.e., the branch not containing any maximal bicliques,  and all sub-branches rooted at $B$. 
\end{itemize}

% Specifically, state-of-the-art algorithms \cite{dai2023hereditary} terminate the recursive procedure when $C$ and $X$ are empty and output $G[S]$ as a maximal biclique. In addition, given a branch $B=(S,C,X)$ during recursion, the pivot $v^*$ is selected from $C\cup X$ by choosing one with the most number of neighbors in $C$ since this strategy would prune as many branches as possible.
% strategy of \cite{dai2023hereditary} first chooses a vertex $v\in C\cup X$ as the pivot and the sub-branches produced by including (1) vertices on the same side as those of $v^*$ and (2) $v$'s neighbors on the other side of $C$ can be safely pruned \cite{dai2023hereditary}. 
% To prune the maximum number of branches, the common practice is to choose the vertex with the maximum degree from $C\cup X$ as the pivot. 
% By adopting the above techniques, existing algorithms achieve the worst-case time complexity of $O(m\cdot \sqrt{2}^n)$, where $m$ and $n$ denote the number of edges and vertices of the graph, respectively. The worst case occurs when the pivot selected in every branch $B$ during recursion has only one non-neighbor on the other side. An example is provided in Figure~\ref{}. It should be noted that the example graph just contains $\sqrt{2}^n -2$ MBCs, making it hard to improve beyond this worst-case time complexity under the basic framework. 

{\Revise Specifically, state-of-the-art algorithms~\cite{dai2023hereditary} follow the \texttt{BasicBB} framework. 
In particular, they terminate the recursive procedure when both $C$ and $X$ become empty and output $G[S]$ as a maximal biclique. 
Moreover, for a branch $B=(S,C,X)$ during recursion, the pivot $v^*$ is selected from $C_L \cup X_L$ (or $C_R \cup X_R$) as the vertex with the minimum number of non-neighbors in $C_R$ (or $C_L$), so as to produce as few sub-branches as possible.
By adopting these techniques, the algorithms achieve a worst-case time complexity of $O(m \cdot (\sqrt{2})^n)$, where $m$ and $n$ denote the number of edges and vertices of the graph, respectively~\cite{dai2023hereditary}. 
This worst case occurs when the pivot selected in every branch $B$ has only one non-neighbor. 
Figure~\ref{fig:existing} illustrates such an example. 
Notably, the example graph contains exactly $(\sqrt{2})^n - 2$ maximal bicliques, indicating that the existing bound of the time complexity under the basic framework is nearly worst-case optimal and it is difficult to further improve it.

}

\if 0
To reduce the number of branches, existing study applies the \emph{pivoting} strategy \cite{dai2023hereditary}. Let $B=(S,C,X)$ be a branch and $v^*\in C_L\cup X_L$ be a pivot vertex. All sub-branches produced by including the vertex in $C_L\setminus \{v^*\}$ and $N(v^*, C_R)$ can be safely pruned. Note that a set of branches can be symmetrically pruned if the pivot vertex is selected from $C_R \cup X_R$. We denote the this  branch-and-bound framework as \texttt{BasicBB} and summarize the pseudo-code in Algorithm~\ref{alg:basic-pivot}. 
% The illustration of \texttt{ParBB} and its relationship to \texttt{BasicBB} is shown in Figure~\ref{fig:illustration}.
% based on the following lemma. 
\fi

In this paper, we propose new techniques for maximal biclique enumeration, including a new stopping criterion and a new pivoting strategy. Armed with a new partition-based branch-and-bound framework, these two techniques collectively contribute to significant theoretical advancements. Specifically, we first elaborate on how the partition-based branch-and-bound framework operates  and how our proposed techniques would be seamlessly integrated into the framework (Section~\ref{subsec:motivation}). We further summarize all possible pruning techniques to the problem setting of enumerating size-constrained maximal bicliques. Finally, we present that an algorithm, incorporating these techniques, achieves a worst-case time complexity of $O(m \cdot \alpha^n + n \cdot \beta)$, where $\alpha (\approx 1.3954)$ is the largest positive root of $x^4 - 2x - 1 = 0$ and $\beta$ is the number of maximal bicliques in the graph. This result outperforms existing state-of-the-art branch-and-bound-based algorithms (Section~\ref{subsec:summary}).

% , whose worst-case time complexity is $O(m \cdot\sqrt{2}^n)$ 

\begin{algorithm}[t]
\small
\caption{The partition-based framework: \texttt{ParBB}}
\label{alg:impr-bb}
\KwIn{A bipartite graph $G=(L\cup R,E)$, $\tau_L$ and $\tau_R$.}
\KwOut{All maximal bicliques in $G$ that satisfy the size constraints.}
\SetKwFunction{List}{\textsf{ParBB\_Rec}}
\SetKwProg{Fn}{Procedure}{}{}
\List{$\emptyset, L\cup R, \emptyset$}\;
\Fn{\List{$S, C, X$}} {
    \tcc{Termination}
    \If{$C$ and $X$ satisfy any stopping criterion}{
        \textbf{Output} all maximal bicliques in $G[S\cup C]$; 
        \Return\;
    }
    \tcc{Pruning}
    \If{any of pruning {\cheng rules} can be applied}{\Return\;}

    \tcc{Partition-based Branching with Pivoting}
    {\cheng Partition} $C$ into two sets $C'$ and $C\setminus C'$, where $C'\subsetneq C$\;

    Choose a pivot vertex $v^*$ from $C \cup X$\;
    Create a set of branches $B_i$ based on the pivot vertex $v^*$ and the partition\;
    \For{each created branch $B_i$}{
    
    \List{$S_i, C_i, X_i$}\;
    }
}
\end{algorithm}

% \section{A New Branch-and-bound Framework}
% \label{sec:impr-BB}

\subsection{A Partition-based Framework: \texttt{ParBB}}
\label{subsec:motivation}

% We motivate our technique from the following two perspectives. 
%
% \underline{From the perspective of the problem}, 
% Many existing studies have witnessed a common phenomenon in real-world graphs, including the bipartite graphs that we target in this paper: they are typically globally sparse but locally dense \cite{chang2019cohesive, boccaletti2006complex}. This characteristic is further validated by the presence of numerous maximal bicliques within these graphs, where a significant portion of these bicliques overlap extensively with one another. The substantial overlap among these maximal bicliques not only leads to redundant computations but also contributes to the formation of dense regions within the graph.
%
% Therefore, an important question arises: can we develop more efficient methods to reduce the redundant computations associated with these overlapping maximal bicliques? Additionally, can we leverage the dense regions induced by these overlaps to gain deeper insights and optimize computations? 
% To answer these questions, we first introduce a new branch-and-bound framework called \texttt{ParBB}, which brings some insights. With a carefully-designed, we can prune a batch of non-maximal bicliques ... 

% To address these two questions, 
\subsubsection{Overview}
In this section, we introduce a new branch-and-bound algorithm called \texttt{ParBB}. 
% , which follows the idea of \emph{divide-and-conquer}. 
Different from \texttt{BasicBB}, given a branch $B=(S,C,X)$, \texttt{ParBB} partitions the candidate set $C$ into two disjoint sets $C'$ and $C\setminus C'$ where $C'\subsetneq C$. 
The key advantage of this partitioning lies in the ability to defer the computation of maximal bicliques involving only vertices in $C'$. Specifically, instead of processing all vertices in one step, \texttt{ParBB} allows the enumeration of maximal bicliques in $C'$ at a later time, in a batch-based manner. This flexibility, enabled by partitioning, ensures that the maximal bicliques in $C'$ are handled efficiently without unnecessary recomputation. By contrast, non-partition-based frameworks, like \texttt{BasicBB}, do not provide such deferred handling and must process all candidates immediately.
We also emphasize that this partition-based framework generalizes \texttt{BasicBB}. If $C'$ is set to be empty at every branch, \texttt{ParBB} reduces to \texttt{BasicBB}. However, the partitioning strategy provides significant benefits by allowing deferred, batch-based enumeration and pruning, leading to a more efficient handling of maximal bicliques compared to non-partition-based frameworks, which must process each biclique individually. In the following, we explore two partitioning strategies that reduce computational overhead by enabling batch-based output of maximal bicliques and batch-based pruning of non-maximal ones. We summarize the procedure following the partition-based framework in Algorithm~\ref{alg:impr-bb}. 

\subsubsection{Candidate Set Partition Strategies}
\label{subsubsec:partition-non-empty}
The intuition behind our partition strategies stems from a key non-trivial observation, which we summarize in the following lemma.
\begin{lemma}
    Given a bipartite graph $G=(L\cup R, E)$ being a 2-biplex, all maximal bicliques in $G$ can be enumerated in $O(m+n\cdot \beta)$ time, where $m=|E|$, $n=|L|+|R|$ and $\beta$ is the exact number of maximal bicliques in $G$, respectively.
\end{lemma}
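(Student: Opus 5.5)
The plan is to work in the complementary bipartite graph $\overline{G}=(L\cup R,\overline{E})$. Since $G$ is a 2-biplex, every vertex has at most two non-neighbors on the other side, so $\overline{G}$ has maximum degree at most $2$. A graph of maximum degree at most $2$ decomposes into connected components that are isolated vertices, simple paths, and simple cycles. Because $\overline{G}$ is bipartite, each cycle has even length. Building $\overline{G}$ costs $O(m+n)$ time, since $|\overline{E}|\le 2n$. We scan each vertex's sorted adjacency list against the other side; a vertex with degree $d$ has only $\le 2$ gaps. Alternatively, we mark neighbors and find the at most two unmarked vertices, which costs $O(d+2)$ per vertex when we restrict attention to the vertices that could be unmarked. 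The components are then found by a linear traversal.

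The correspondence step is the following. A pair $(V_L,V_R)$ with $V_L\subseteq L$, $V_R\subseteq R$ is a biclique in $G$ iff $V_L\cup V_R$ is an independent set in $\overline{G}$. Maximality of the biclique is the same as maximality of the independent set, so MBCs of $G$ are in bijection with the maximal independent sets (MIS) of $\overline{G}$. One caveat is the convention that both sides are non-empty, or the sizes $\tau_L,\tau_R$. The MIS that lie entirely on one side are only $L$ and $R$ themselves, when these are maximal. They can be detected and filtered in $O(n)$ each, and this does not affect the bound. Since independence decomposes over components, the maximal independent sets of $\overline{G}$ are exactly the products $I_1\cup\cdots\cup I_k$, where each $I_j$ is an MIS of component $Q_j$. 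Hence $\beta=\prod_j \beta_j$, where $\beta_j$ is the number of MIS of $Q_j$.

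For each path or cycle component, the plan is to enumerate its maximal independent sets with polynomial delay, proportional to the component size. Along a path, a set is a maximal independent set iff consecutive chosen vertices are separated by gaps of one or two, with endpoint gaps of zero or one. These "gap sequences" can be generated by a simple DFS/Gray-type recursion in which every leaf is an output, so the total time is $O(|Q_j|\cdot\beta_j)$. Cycles are handled by fixing the status of one vertex and reducing to paths. Isolated vertices are forced into every MIS. We precompute the lists $\mathcal{I}_j$ for all components in total time $\sum_j O(|Q_j|\beta_j)$. This is $O(n\beta)$, because each $\beta_j\le\beta$ and $\sum_j|Q_j|\le n$.

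Finally, we output the Cartesian product by an odometer-style iteration over the indices $(i_1,\dots,i_k)$. Each output biclique is written in $O(n)$ time; alternatively, we maintain it incrementally, with amortized cost bounded by $O(n)$ per output. The only components of size $>1$ with $\beta_j=1$ are those whose MIS are forced. We can merge all components with $\beta_j=1$ into a fixed base set, so that the odometer runs only over components with $\beta_j\ge2$. Their number is then at most $\log_2\beta$, and each step costs $O(n)$ in the worst case. The total is $O(m+n+n\beta)=O(m+n\beta)$. The main obstacle I expect is the per-component enumeration. We must argue carefully that the MIS of a path or even cycle can be listed with no wasted branches, so that the cost is linear in the output size. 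The cleanest way is the gap characterization above, which makes every recursion leaf a valid maximal set.
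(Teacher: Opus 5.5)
Your proposal is correct and follows the same route as the paper's proof, which is only sketched: pass to the complement $\overline{G}$, whose maximum degree is at most $2$, decompose it into isolated vertices, paths and even cycles, use the bijection between maximal bicliques of $G$ and maximal independent sets of $\overline{G}$, and assemble the latter as products of per-component maximal independent sets. The details you add are what the paper leaves out: the $O(m+n)$ construction of $\overline{G}$, the gap characterization for paths, removing the one-sided sets $L$ and $R$, and the odometer over components. One small fix for cycles: fixing $c_1\notin I$ alone does not reduce to an unconstrained path, since $c_2$ or $c_\ell$ must also be chosen, so either fix $c_2$ as well or filter (the rejected sets inject into those containing $c_1$, so the cost stays $O(|Q_j|\beta_j)$).
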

\begin{proof}
    To maintain coherence, we omit the detailed proof here. The idea is that each connected component of the complementary graph $\overline{G}$ of a 2-biplex $G$ is either a single vertex, a simple path, or a simple cycle. This follows naturally, as the maximum degree of $\overline{G}$ is at most 2. Consequently, each maximal biclique in $G$ corresponds to a maximal independent set in $\overline{G}$, which can be constructed in a batch-based manner by merging maximal independent sets from the single vertices, simple paths, and simple cycles. 
    % An algorithm verifying this lemma is provided in the final part of this section.
\end{proof}

% is to find a computationally efficient method to identify a subset of candidate vertices $C'$. To achieve this goal, 
% such that any bicliques in $G[S\cup C']$ cannot be globally maximal. 
% To achieve this goal, we observe that $X$ contains the vertices that have been explored in earlier branches and are typically used to check maximality. Therefore, we leverage the number of connections and/or disconnections between the vertices in $C$ and $X$ to determine our partition. 

This lemma tells us that there exists such a group of graphs that all maximal bicliques inside can be enumerated in nearly optimal time since it needs at least $O(n\cdot \beta)$ to output all maximal bicliques in $G$ and the algorithm only takes extra $O(m)$ time. To equip this lemma into our algorithm, we propose a new stopping criterion shown as follows. 
% \begin{minipage}{0.45\textwidth}
\vspace{2mm}

\noindent
\fbox{
\begin{minipage}{0.45\textwidth}
\textbf{Stopping Criterion.} If a branch $B=(S,C,X)$ satisfies $G[S\cup C]$ induces a 2-biplex and $X=\emptyset$,  we output all maximal bicliques in $G[S\cup C]$ and return.
\end{minipage}
}
% \end{minipage}
\vspace{2mm}
% However, this lemma cannot be directly applied to our problem due to three reasons. First, 

Consider an arbitrary branch $B = (S, C, X)$ that does not satisfy the stopping criterion. It must fall into at least one of the following two cases: (1) there exist vertices in $C_L$ with at least 3 non-neighbors in $C_R$ (or symmetrically, vertices in $C_R$ with at least 3 non-neighbors in $C_L$), or (2) there exist vertices in $X$. 
Therefore, to minimize the number of produced branches, it is essential to apply the stopping criterion as early as possible, which aims to exhaustively eliminate vertices belonging to these cases as quickly as possible.
% To fully harness the power of this termination condition, the key step is to exhaustively eliminate the presence of vertices belonging to these cases as quickly as possible. 
To achieve this, we divide all non-terminable branches into two categories based on whether $X$ is empty, and we propose a candidate partition strategy and enumeration procedure for each category as follows.

\smallskip\noindent\textbf{Category 1 - $\bm{X}$ is empty.} When $X$ is empty, each maximal biclique in the subgraph $G[S\cup C]$ is also a maximal biclique of $G$. 
% We first consider the simple case when $X$ is empty. 
We partition the candidate set into two disjoint sets such that $C'$ always induces a 2-biplex. Formally, for each $v\in C_L$, we collect the number of disconnections with the vertices in $C_R$, i.e., $\overline{d}(v, C_R)$. Symmetrically, for each $v\in C_R$, we collect the number of disconnections with the vertices in $C_L$, i.e., $\overline{d}(v, C_L)$. We define the partition $C'=C_L'\cup C_R'$ as follows, where 
\begin{equation}
\label{eq:C_L_C}
\begin{aligned}
    C_L'=\{v\in C_L\mid \overline{d}(v, C_R) \le 2\}    \\
% \end{aligned}
% % \end{equation}
% % and
% % \begin{equation}
% % \label{eq:C_R_C}
% \begin{aligned}
     C_R'=\{v\in C_R\mid \overline{d}(v, C_L)\le 2 \} 
\end{aligned}
\end{equation}
% The rationale of the above definition is as follows. First, . 
When $C_L'$ and $C_R'$ are not empty, it can be easily verified the graph induced by $C'$ is a 2-biplex since $C_R'$ is a subset of $C_R$, for a vertex $v\in C_L'$, we have $\overline{d}(v, C_R')\le \overline{d}(v, C_R)\le 2$. Similarly, since $C_L'$ is a subset of $C_L$,  for a vertex $v\in C_R'$, we have $\overline{d}(v, C_L')\le \overline{d}(v, C_L)\le 2$.

Recall that (1) our goal is to reduce the presence of vertices with at least 3 non-neighbors in $C_L$ and $C_R$ (since $X$ is empty in this category), i.e., those vertices in $C_L\setminus C_L'$ and $C_R\setminus C_R'$, and (2) the pivoting strategy in \cite{dai2023hereditary} says that the branching steps conducted on the pivot vertex $v^*$ and its non-neighbors are already enough to enumeration all maximal biclique within a branch, We have two choices.  One is to choose a vertex $v^*\in C_L\setminus C_L'$ as the pivot vertex. The other is to choose a vertex $v^*$ that has some non-neighbors in $C_L\setminus C_L'$ as the pivot vertex. Symmetrically, we can also choose a pivot vertex from $C_R$. Following this direction, we adapt the existing pivoting strategy \cite{dai2023hereditary} to our setting and summarize our pivoting strategy (when $X=\emptyset$) as follows. 
% \vspace{2mm}
% \smallskip\noindent
% \fbox{
% \begin{minipage}{0.45\textwidth}
% \textbf{Pivoting Strategy ($\bm{X=\emptyset}$).} 
Let $B=(S,C,X)$ be a branch. Given a partition $C'$ and $C\setminus C'$ according to Eq.~(\ref{eq:C_L_C}), a vertex $v\in C$ is a candidate pivot if any of the following is satisfied: (1) $v\in C\setminus C'$; or (2) $v\in C_L'$ (resp. $C_R'$) has at least one non-neighbor $u\in C_R$ (resp. $C_L$) in $C_R\setminus C_R'$ (resp. $C_L\setminus C_L'$).
% \begin{enumerate}[leftmargin=*]
%     \item $v\in C\setminus C'$;
%     \item $v\in C_L'$ (resp. $C_R'$) has at least one non-neighbor $u\in C_R$ (resp. $C_L$) in $C_R\setminus C_R'$ (resp. $C_L\setminus C_L'$).
% \end{enumerate}
The pivot $v^*$ is selected with the minimum value of $\overline{d}(v^*, C_R)$ (if $v^*\in C_L$) or $\overline{d}(v^*, C_L)$ (if $v^*\in C_R$) among all candidate pivots. If there exists multiple vertices that have the same minimum value, we select the pivot arbitrarily. 
% \end{minipage}
% }
% \end{minipage}
% \vspace{2mm}
Once the pivot vertices $v^*$ is selected, the branching steps will be conducted on $v^*$ and its non-neighbors on $v^*$'s opposite side. The correctness of the pivoting strategy can be easily verified \cite{dai2023hereditary}.

% Based on this, when $X$ becomes empty, we partition the candidate set into two disjoint sets such that $C'$ always induces a 2-biplex. 
% Formally, for each $v\in C_L$, we collect the number of disconnections with the vertices in $C_R$, i.e., $\overline{d}(v, C_R)$. Symmetrically, for each $v\in C_R$, we collect the number of disconnections with the vertices in $C_L$, i.e., $\overline{d}(v, C_L)$. We define the partition $C'=C_L'\cup C_R'$ as follows, where 
% \vspace{5cm}

\smallskip\noindent\textbf{Category 2 - $\bm{X}$ is not empty.} When $X$ is non-empty, we observe that $X$ contains the vertices that have been explored in earlier branches and are typically used to check maximality. Therefore, we leverage the number of connections and/or disconnections between the vertices in $C$ and $X$ to determine our partition. 
We consider the vertices in one side of $C$ and those in the other side of $X$. Specifically, for each vertex $v\in C_L$, we collect the number of disconnections with the vertices in $X_R$, i.e., $\overline{d}(v, X_R)$. Symmetrically, we collect $\overline{d}(v, X_L)$ for each vertex $v \in C_R$. Besides, for each $v\in C$, we also collect the number of disconnections with the vertices on $v$'s opposite side. 
Based on these values, we define the partition $C' = C_L' \cup C_R'$ as follows, where
\begin{equation}
\label{eq:C_L_X}
\begin{aligned}
    C_L'= \{v\in C_L\mid \overline{d}(v,X_R)=0 \land \overline{d}(v, C_R)\le 2\} \\
    C_R'= \{v\in C_R\mid \overline{d}(v,X_L)=0 \land \overline{d}(v, C_L)\le 2\}
\end{aligned}
\end{equation}

Recall that our goal is to eliminate the presence of vertices in $X$ and the vertices in $C$ having at least 3 non-neighbors. The above definition seeks to identify a subset of vertices $C'$ from $C$ such that the branching steps produced by including the vertex $v\in C'$ will not reduce the presence of those vertices. To demonstrate this, assume without loss of generality that $v\in C_L'$.
When $X_R\ne \emptyset$, $C_L'$ consists of vertices that are fully connected to $X_R$, i.e., $N(v, X_R)=X_R$. In the case where $X_R=\emptyset$, we observe that (1) the size of $X_R$ can no longer be reduced, and (2) $\overline{d}(v, \emptyset)=0$ since there are no vertices to disconnect from. 
Thus, based on this partition, we have three options for selecting a pivot vertex. First, we can choose a vertex $v^*$ directly from $X_L$. Second, we can select $v^*$ from $C_L\setminus C_L'$, as these vertices have disconnections with some members of $X_R$. Third, we can pick $v^*$ from $C_L'$ if it has non-neighbors in $C_R\setminus C_R'$. Symmetrically, a vertex can also be chosen from $C_R\cup X_R$. We summarize our pivoting strategy (when $X\ne\emptyset$) as follows. 
% \vspace{2mm}
% \smallskip\noindent
% \fbox{
% \begin{minipage}{0.45\textwidth}
% \textbf{Pivoting Strategy ($\bm{X\ne\emptyset}$).} 
Let $B=(S,C,X)$ be a branch. Given a partition $C'$ and $C\setminus C'$ according to Eq.~(\ref{eq:C_L_X}), a vertex $v\in C\cup X$ is a candidate pivot if any of the following is satisfied: (1) $v\in (C\setminus C')\cup X$; or (2) $v\in C_L'$ (resp. $C_R'$) has at least one non-neighbor in $C_R\setminus C_R'$ (resp. $C_L\setminus C_L'$).
% \begin{enumerate}[leftmargin=*]
    % \item $v\in (C\setminus C')\cup X$;
    % \item $v\in C_L'$ (resp. $C_R'$) has at least one non-neighbor in $C_R\setminus C_R'$ (resp. $C_L\setminus C_L'$).
% \end{enumerate}
The pivot $v^*$ is selected with the minimum value of $\overline{d}(v^*, C_R)$ (if $v^*\in C_L\cup X_L$) or $\overline{d}(v^*, C_L)$ (if $v^*\in C_R\cup X_R$) among all candidate pivots. If there exists multiple vertices that have the same minimum value, we select the pivot from $X$ if possible; otherwise, we select one from $C$.  
% \end{minipage}
% }
% \end{minipage}
% \vspace{2mm}
Similarly, once the pivot vertices $v^*$ is selected, the branching steps will be conducted on $v^*$ (if $v^*\in C$) and its non-neighbors in $C$ on $v^*$'s opposite side. The correctness of the pivoting strategy can be easily verified \cite{dai2023hereditary}.

\begin{algorithm}[t]
\small
\caption{Pivot-based branch-and-bound algorithm with improved candidate set partition strategy: \texttt{IPS}}
\label{alg:impr-pivot}
\KwIn{A bipartite graph $G=(L\cup R,E)$, $\tau_L$ and $\tau_R$.}
\KwOut{All maximal bicliques in $G$ that satisfy the size constraints.}
\SetKwFunction{List}{\textsf{ParBB\_Rec}}
\SetKwProg{Fn}{Procedure}{}{}
\List{$\emptyset, L\cup R, \emptyset$}\;
\Fn{\List{$S, C, X$}} {
\tcc{Termination}
    \If{$G[S\cup C]$ induces a 2-biplex and $X=\emptyset$}{
        \textbf{Output} all maximal bicliques in $G[S\cup C]$\;
        \Return\;
    }
\tcc{Pruning}
    \If{any of pruning rules is satisfied}{\Return\;}
\tcc{Partition-based Branching with Pivoting}
    % \If {$X=\emptyset$} {
    %     $C'\gets$ vertices that satisfy Eq.~(\ref{eq:C_L_C})\;
    %     $v^*\gets$ the pivot selected by Pivoting Strategy ($X=\emptyset$)\;
    %     Create branches $B_i$ on $v^*$ and its non-neighbors on $v^*$'s opposite side\;
    %     % $v^*\gets$ a vertex in $(C\setminus C')\cup X$ with the min. value of $\overline{d}(\cdot, C\setminus C')$\;
    %     % \lIf{$v^*\in X$}{Create $\overline{d}(v^*, C\setminus C')$ branches over the vertices in $\overline{N}(v^*, C\setminus C')$ based on Lemma~\ref{lemma:basic-pivot}}
    %     % \lElse{Create $\overline{d}(v^*, C\setminus C') + 1$ branches over the vertices in $\{v^*\}\cup \overline{N}(v^*, C\setminus C')$ based on Lemma~\ref{lemma:basic-pivot}}
        
    % }
    % \Else{
        % $C'\gets$ vertices that satisfy Eq.~(\ref{eq:C_L_X})\;
        $v^*\gets$ the pivot selected by Partition-based Pivoting Strategy\;
        Create branches $B_i$ on $v^*$ (if $v^*\in C$) and its non-neighbors on $v^*$'s opposite side\;
        % $v^*\gets$ a vertex in $C\setminus C'$ with the min. value of $\overline{d}(\cdot, C)$\;

        % Create $\overline{d}(v^*, C) + 1$ branches over the vertices in $\{v^*\}\cup \overline{N}(v^*, C)$ based on Lemma~\ref{lemma:basic-pivot}
    % }
    
    \For{each created branch $B_i$}{\List{$S_i, C_i, X_i$}\;}
}
\end{algorithm}

\subsection{Implementation Details and Analysis}
\label{subsec:summary}

Before we introduce the implementation details of the algorithm, we first elaborate on the possible pruning rules that can be applied as follows.

\vspace{2mm}
\noindent
\fbox{
\begin{minipage}{0.45\textwidth}
\textbf{Pruning Rules.} Let $B=(S,C,X)$ be a branch. Define two variables $\theta_L=|S_L|+|C_L|$ and $\theta_R=|S_R|+|C_R|$. We can prune the branch $B$ if any of the following is satisfied:
\begin{enumerate}
    \item[(P1)]  $\theta_L<\tau_L$ or $\theta_R < \tau_R$;
    \item[(P2)]  There exists a vertex $v\in X_L$ (resp. $X_R$) such that $v$ connects with all vertices in $C_R$ (resp. $C_L$).
\end{enumerate}
\end{minipage}
}
% \end{minipage}
\vspace{2mm}

The rationale of the first rule is that $\theta_L$ and $\theta_R$ can be verified to be the upper bound of the number of vertices at the left side and that at the right side of a maximal biclique covered by the branch $B$, respectively. If either of them does not comply with the size constraint, the maximal biclique enumerated from this branch would violate the problem definition. 
The rationale of the second rule is that all maximal bicliques in subgraph $G[S\cup C]$ would not be globally maximal, since an additional vertex $v$ can be included in each of them. 
In addition, we observe that the partition strategies and the pivoting strategies can be naturally combined when $X=\emptyset$ and $X\ne \emptyset$. We summarize as follows.

\vspace{2mm}
\noindent
\fbox{
\begin{minipage}{0.45\textwidth}
\textbf{Partition-based Pivoting Strategy.} Let $B=(S,C,X)$ be an arbitrary branch during the recursion. We partition $C$ into two disjoint sets $C'=C_L'\cup C_R'$ and $C\setminus C'$ by setting
\begin{equation}
\label{eq:pivot}
\begin{aligned}
    C_L'= \{v\in C_L\mid \overline{d}(v,X_R)=0 \land \overline{d}(v, C_R)\le 2\} \\
    C_R'= \{v\in C_R\mid \overline{d}(v,X_L)=0 \land \overline{d}(v, C_L)\le 2\}.
\end{aligned}
\end{equation}
A vertex $v$ is a candidate pivot if any of the following is satisfied: 
% (1) $v\in (C\setminus C')\cup X$; or (2) $v\in C_L'$ (resp. $C_R'$) has at least one non-neighbor in $C_R\setminus C_R'$ (resp. $C_L\setminus C_L'$).
\begin{enumerate}
    \item[(C1)] $v\in (C\setminus C')\cup X$;
    \item[(C2)] $v\in C_L'$ (resp. $C_R'$) and it has at least one non-neighbor in $C_R\setminus C_R'$ (resp. $C_L\setminus C_L'$).
\end{enumerate}
The pivot $v^*$ is selected among all candidates with the minimum number of non-neighbors in $C$. We first select the pivot from $X$ if possible; otherwise, we select one from $C$.
\end{minipage}
}
\vspace{2mm}

Based on the all possible techniques, including (1) the partition-based framework, (2) the candidate set partition strategies, (3)  the pivoting strategy and (4) pruning techniques, we develop a new branch-and-bound algorithm called \texttt{IPS}. The pseudocode is presented in Algorithm~\ref{alg:impr-pivot}.

\smallskip\noindent
\textbf{Worst-case Time Complexity Analysis.}
The overall time complexity can be expressed as $O\bigl(P(n)\cdot T(n)\bigr)$, where $P(n)$ denotes the running time per recursive call and $T(n)$ denotes the total number of recursive calls. 
For the former, given a branch $B=(S,C,X)$, we note that this new partition-based pivoting strategy will not incur more overhead than that of the existing pivoting strategy, i.e., the partition-based pivoting strategy can also select a pivot $v^*$ in $O(m)$ time. To see this, our pivot selection procedure can be decomposed into three steps. The first step is to calculate necessary values for partition, i.e., $\overline{d}(\cdot, \cdot)$. The second step is to collect the candidate pivots based on the partition. The last step is to choose the pivot among candidates. Consider the first step. We take a vertex $v\in C_L\cup X_L$ as an example. To calculate $\overline{d}(v, X_R)$ (resp. $\overline{d}(v, C_R)$), we first initialize $\overline{d}(v, X_R)$ (resp. $\overline{d}(v, C_R)$) to be $|X_R|$ (resp. $|C_R|$). Then we substract one from $\overline{d}(v, X_R)$ (resp. $\overline{d}(v, C_R)$) if there exists an edge $(v, u)$ between $v$ and a vertex $u\in X_R$ (resp. $C_R$), which can be done in $O(m)$. Symmetric procedure can be done for a vertex $v\in C_R\cup X_R$. Consider the second step. It can be done in $O(|C|)$, which is bounded by $O(m)$. Consider the third step. The critical point is to calculate a vertex $v$'s non-neighbors. We only collect the non-neighbor information for the vertices that have at most two non-neighbors, which can be done in $O(m+2\cdot |C|)$. Thus, $P(n)$ is bounded by $O(m)$. 
For the latter, we present the detailed analysis of $T(n)$ and the resulting time complexity in the following theorem, which shows that the worst-case bound of \texttt{IPS} is strictly better than that of existing algorithms.

\begin{theorem}
\label{theo:ps+}
Given a bipartite graph $G=(L\cup R, E)$ with $n=|L|+|R|$ and $m=|E|$, \texttt{IPS} enumerates all maximal bicliques in $G$ in $O(m \cdot \alpha^n + n\cdot \beta)$ time, where $\alpha~(\approx 1.3954)$ is the largest positive real root of $x^4-2x-1=0$. 
\end{theorem}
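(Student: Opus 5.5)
Our plan is to split the running time into two parts. The first is the cost of the internal (branching) nodes of the recursion tree, which is $O(m)$ per call by the analysis of $P(n)$ above. The second is the cost at the terminating leaves, where the 2-biplex lemma gives $O(m + n\cdot\beta_B)$ for a leaf $B$ containing $\beta_B$ maximal bicliques. The stopping criterion requires $X=\emptyset$, so every maximal biclique output at a leaf is globally maximal. The branching is mutually exclusive and collectively exhaustive, so distinct leaves output disjoint sets of bicliques and $\sum_B \beta_B\le\beta$. The leaf cost therefore totals $O(m\cdot T(n) + n\cdot\beta)$. It then suffices to prove that the number of recursive calls satisfies $T(n)=O(\alpha^n)$, with $n$ measured as $|C|$ (plus possibly a term for $X$, see below).

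To bound $T$, we set up a recurrence on the measure $|C|$. Consider a non-terminating, non-pruned branch and its pivot $v^*$ with $k=\overline{d}(v^*, C_{\text{opp}})$ non-neighbors. The branching creates at most $k+1$ children. In the $i$-th child, the vertex $v^*$ or a non-neighbor $u_i$ is added to $S$, the earlier ones are moved to $X$, and all non-neighbors of the added vertex on the opposite side are removed from $C$. The proof is then a case analysis on $k$.

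If $k\ge 3$, each child loses at least one vertex and, in the child that includes a vertex with at least 3 non-neighbors, at least 4. The minimum-non-neighbor rule guarantees that every non-neighbor $u_i$ that is itself a candidate has at least $k$ non-neighbors, so it drops at least $k+1$ vertices. This gives recurrences like $T(n)\le T(n-k-1)+\sum_{i} T(n-i-\dots)$ that are dominated by $\alpha^n$.

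The delicate cases are $k\in\{0,1,2\}$, where $v^*$ is a candidate only through (C1) with $v^*\in X$ or $v^*\notin C'$, or through (C2). Three situations arise:
\begin{itemize}
\item If $v^*\in X$, then $k\ge1$ by pruning rule (P2), and only the non-neighbor branches are created.
\item If $v^*\in C\setminus C'$ because it misses some vertex of $X$, then including it shrinks $X$.
\item If $v^*$ qualifies via (C2), it has a non-neighbor $u\in C\setminus C'$ with at least 3 non-neighbors in $C$, or one missing some $X$-vertex. The branch that includes $u$ then removes at least 4 vertices, and the branch that includes $v^*$ removes $u$ along with the other non-neighbors.
\end{itemize}
The worst case we expect is $k=1$ with the non-neighbor $u$ having exactly 3 non-neighbors. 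This gives $T(n)\le T(n-2)+T(n-4)$ per level, and combined with a second-level expansion when the child again faces a one-non-neighbor pivot, it yields the characteristic equation $x^4=2x+1$, i.e., $T(n)\le T(n-3)+T(n-3)+T(n-4)$-type patterns normalized to $x^{-4}(2x+1)=1$. We will enumerate the cases and check that each characteristic polynomial has largest root at most $\alpha\approx1.3954$.

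The main obstacle is the branches where the progress is in $X$ rather than $C$: for example, $v^*\in C_L\setminus C_L'$ with few $C$-non-neighbors but some $X$-non-neighbor, or $v^*\in X$ with $k=1$, which creates a single child and must not be counted as a level of exponential blow-up. We plan to handle these by charging chains of single-child (or $X$-only-shrinking) calls to their nearest descendant with genuine branching. This costs only a polynomial factor, which is absorbed since each such call decreases $|C|+|X|\le n$. If a plain $|C|$ measure fails in some case, we would fall back to a measure such as $|C|+\epsilon|X|$ and re-verify the recurrences. With $T(n)=O(\alpha^n)$ in hand, multiplying by $P(n)=O(m)$ and adding the leaf output cost gives $O(m\cdot\alpha^n+n\cdot\beta)$.
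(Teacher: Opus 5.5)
Your cost accounting ($O(m)$ per call, leaf output charged through the 2-biplex lemma with $\sum_B\beta_B\le\beta$) matches the paper, and so does your split of cases by where the pivot sits. The gap is in the recurrence, and it comes from the measure. Under $|C|$, take a pivot $v^*\in C_L$ with a single non-neighbor $u\in C_R$, where:
\begin{itemize}
\item either $v^*$ (the paper's Case (2.1)) or $u$ (Case (3.1)) lies outside $C'$ only because it misses a vertex of $X$, and
\item $v^*$ is the only non-neighbor of $u$ in $C_L$; the minimum-non-neighbor rule only forces $\overline{d}(u,C_L)\ge 1$.
\end{itemize}
Each of the two children removes exactly two vertices of $C$, so $T(c)\le 2T(c-2)$, whose rate is $\sqrt2>\alpha$. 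Your third bullet's claim that the branch including $u$ removes at least 4 vertices fails in exactly this ``missing some $X$-vertex'' subcase. Because this is a genuine two-way branching, charging chains of single-child calls does not help. The extra progress here is the deletion of an $X$-vertex, which you noticed in your second bullet but which $|C|$ does not register. The paper therefore measures $|C|+|X|$ (equal to $n$ at the root), under which this configuration gives $T(n-2)+T(n-3)$.

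You have also misidentified the tight case. $T(n-2)+T(n-4)$ has rate about $1.27$, and unrolling it further cannot raise that rate, so no second-level expansion of it yields $x^4=2x+1$. In the paper, $\alpha$ is attained directly at $k=2$, in either of two situations: the pivot has an $X$-non-neighbor (Case (2.1)), or it qualifies via (C2) and both its non-neighbors have at least two $C$-non-neighbors (Case (3.2.2)). There the three children shrink $|C|+|X|$ by $3$, $3$ and $4$ in some order, giving $T(n)\le 2T(n-3)+T(n-4)$.

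Moving to $|C|+|X|$ is not free either. The moves of $u_1,\dots,u_{i-1}$ from $C$ to $X$ no longer count, so for $k=3$ with no $X$-non-neighbor (Case (2.2)) the coarse bound becomes $4T(n-4)$, again rate $\sqrt2$. The paper needs a second-level analysis there, producing $2T(n-4)+2T(n-6)+2T(n-7)$.

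Your fallback $|C|+\epsilon|X|$ with an intermediate weight such as $\epsilon=1/2$ would handle both ends at one level:
\begin{itemize}
\item the bad $k=1$ case becomes $T(n-2)+T(n-5/2)$;
\item the $k=2$ cases stay within $2T(n-3)+T(n-4)$;
\item $k=3$ gives $2T(n-4)+T(n-9/2)+T(n-5)$;
\end{itemize}
and all of these have rate at most $\alpha$. As written, though, the proposal neither fixes the measure nor checks these recurrences, and they are exactly where the constant $\alpha$ comes from.
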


{\Revise
\begin{proof}[Proof]
Let $B=(S,C,X)$ be a branch and let $n=|C|+|X|$. 
The running time of \textsf{ParBB\_Rec} is dominated by the size of its recursion tree, i.e., the total number of generated branches. 
At each branch, the pivot $v^*$ is selected and the algorithm creates sub-branches on $v^*$ (if $v^*\in C$) and on its $k$ non-neighbors on the opposite side, where $k=\overline{d}(v^*,C)$. 
Let $T(n_0)$ denote the number of recursive calls involving $v^*$ (when applicable), and let $T(n_1),\cdots,T(n_k)$ denote the number of recursive calls involving each non-neighbor of $v^*$, respectively. Accordingly, the number of recursive calls satisfies
\begin{equation}
T(n)\ \le\
\begin{cases}
\sum_{i=1}^{k} T(n_i) &\quad\quad \text{if } v^*\in X, \\
T(n_0) + \sum_{i=1}^{k} T(n_i) &\quad\quad \text{if } v^*\in C,
\end{cases}
\end{equation}
where each $n_i$ denotes the size of the remaining instance in the corresponding sub-branch, determined by the number of vertices removed from $C\cup X$. Each branching step yields a linear recurrence, and the exponential growth rate of its solution is determined by the largest real root of the corresponding characteristic equation\footnote{For a linear recurrence $T(n)\le \sum_{i} c_i\cdot T(n-a_i)$, one assumes a solution of the form $T(n)=x^n$, which leads to the characteristic equation $x^n - \sum_i c_i \cdot x^{n- a_i} = 0$. The largest real root of this equation determines the asymptotic growth rate of $T(n)$~\cite{kratsch2010exact}.}.

The analysis systematically characterizes all possible branching cases. Without loss of generality, assume the pivot $v^*$ is selected from the left side, i.e., $v^*\in C_L\cup X_L$. Depending on the specific position where $v^*$ is selected, each branch falls into one of the following three categories. For each category, we derive the recurrence by tracking the reduction of $C\cup X$ in each sub-branch and upper bound the total number of recursive calls accordingly.

\smallskip\noindent
\underline{\textbf{Case (1) - $\bm{v^*\in X_L}$.}}
If $k=0$, the branch is pruned by Rule (P2); hence $k\ge 1$. The algorithm creates $k$ sub-branches on $v^*$'s non-neighbors in $C_R$. It can be verified that each sub-branch removes $v^*$, the chosen non-neighbor, and at least $k$ additional vertices on the opposite side, giving a size decrease of at least $k+2$.
Therefore,
\begin{equation}
T(n)\ \le\ k\cdot T\bigl(n-(k+2)\bigr)\qquad (k\ge 1).    
\end{equation}
This recurrence implies $T(n)=O(n)$ when $k=1$, and an exponential bound with base $\alpha_1=\max_{k\ge 2} k^{1/(k+2)}<1.26$ when $k\ge 2$.

\smallskip\noindent
\underline{\textbf{Case (2) - $\bm{v^*\in C_L \setminus C_L'}$.}} Given $v^*\notin C_L'$, either $\overline{d}(v^*, X_R) \neq 0$ or $\overline{d}(v^*, X_R) = 0 \land \overline{d}(v^*, C_R) > 2$ holds. We consider both cases.

\smallskip\noindent\textbf{Case (2.1) - $\bm{\overline{d}(v^*, X_R)\ne 0}$.}
It follows that (i) $v^*$ has at least one non-neighbor in $X_R$ and (ii) all non-neighbors of $v^*$ in $C_R$ are candidate pivots. 
The algorithm branches on $v^*$ and on its $k$ non-neighbors in $C_R$.
The branching step on $v^*$ removes $v^*$, all $k$ non-neighbors in $C_R$, and one excluded non-neighbor in $X_R$, resulting in a decrease of at least $k+2$. 
Each branching step on a non-neighbor of $v^*$ removes the chosen vertex and, due to the pivot selection rule, enforces the removal of at least $k$ additional vertices on the opposite side, resulting in a decrease of at least $k+1$. 
Thus,
\begin{equation}
T(n)\ \le\ T\bigl(n-(k+2)\bigr)\ +\ k\cdot T\bigl(n-(k+1)\bigr)\qquad (k\ge 0).
\end{equation}
The worst case occurs at $k=2$, giving the exponential bound with base $\alpha_2\approx 1.3954$, where $\alpha_2$ is the largest real root of $x^4-2x-1=0$.

\smallskip\noindent
\textbf{Case (2.2) - $\bm{\overline{d}(v^*, X_R)= 0}$.}
It follows that (i) $k\ge 3$ and (ii) all non-neighbors of $v^*$ in $C_R$ are candidate pivots. The algorithm creates $k+1$ sub-branches on $v^*$ and its $k$ non-neighbors in $C_R$. 
By the pivot selection rule, branching on either $v^*$ or its non-neighbors removes at least $k+1$ vertices from $C\cup X$, yielding a decrease of at least $k+1$ in each sub-branch. This gives the coarse bound
\begin{equation}
T(n)\ \le\ (k+1)\cdot T\bigl(n-(k+1)\bigr) \qquad (k\ge 3).
\end{equation}
The worst case occurs at $k=3$. To achieve a tighter bound, a finer-grained analysis at $k=3$ leads to the following recurrence
\begin{equation}
T(n)\ \le\ 2\cdot T(n-4)\ +\ 2\cdot T(n-6)\ +\ 2\cdot T(n-7),
\end{equation}
whose branching factor is strictly smaller than $\alpha_2$. 
For $k\ge 4$, the coarse bound applies, with exponential base $\max_{k\ge 4}(k+1)^{1/(k+1)} < 1.38< \alpha_2$.
We put the detailed analysis for $k=3$ as follows.

The algorithm produces four sub-branches, one on $v^*$ and one on each of its non-neighbors $u_1$, $u_2$ and $u_3$, denoted by $B_0$, $B_1$, $B_2$ and $B_3$, respectively. Consider $B_i=(S_i, C_i, X_i)$ produced on $u_i$ for $1\le i\le 3$. It can be verified that $u_i\in C_R \setminus C_R'$. Let $n_1$, $n_2$ and $n_3$ be the size of $|C_i|+|X_i|$ for $1\le i\le 3$, respectively. We first have $n_i = |C_i|+|X_i|\le |C|+|X|-4=n-4$. 
% Note that the worst case appears when all the produced branches have exactly $n-4$ vertices in $C_i\cup X_i$. Thus, we provide a finer-grained analysis for the worst case as follows.  
We skip the analysis for $B_1$ since $n_1$ cannot achieve a tighter bound. Consider the branch $B_2=(S_2,C_2,X_2)$ produced on $u_2$. Note that $u_1\in X_2$ and $v^*\notin C_2\cup X_2$. Since $\overline{d}(u_1, C_L)=3$ under the worst case, we have $\overline{d}(u_1, {C_2}_L)=2$, where ${C_2}_L$ is the left side of $C_2$. Then consider the pivot selected in the branch $B_2$, which we denote by $v_2^*$. Either of the following case holds: (i) if $v_2^*\in X_2$, then $\overline{d}(v_2^*, C_2)\le 2$ or (ii) if $v_2^*\in C_2$, then $\overline{d}(v_2^*, C_2)\le 1$; since otherwise $u_1$ will be the pivot. 

For the former case, two sub-branches would be produced from $B_2$ under the worst case, i.e., those produced upon $v_2^*$'s two non-neighbors in $C_2$. We have
    \begin{equation}
        T(n_2)\le 2\cdot T(n_2-4) \le 2\cdot T(n-8)
    \end{equation}
    
For the latter case, two sub-branches would be produced from $B_3$ under the worst case, i.e., one produced upon $v_2^*$ and the other produced upon $v_2^*$'s non-neighbor in $C_2$. We have 
    % the recurrence.
    \begin{equation}
        T(n_2)\le T(n_2-2) + T(n_2-3) \le T(n-6) + T(n-7)
    \end{equation}
    
It is easy to see that the worst case would happen in the latter one. Similar to the analysis of $B_2$, we can have the same recurrence for the sub-branch $B_3$. Thus, we omit the details of the recurrence for $T(n_3)$. To sum up, we have 
    \begin{equation}
        T(n) \le 2 \cdot T(n-4) + 2 \cdot T(n-6) + 2 \cdot T(n-7)
    \end{equation}

\smallskip\noindent
\underline{\textbf{Case (3) - $\bm{v^*\in C_L'}$.}} It follows $k\in\{1,2\}$. We analyze both cases.

\smallskip\noindent\textbf{Case (3.1) - $k=1$.} 
The algorithm produces two sub-branches, one on $v^*$ and the other on its non-neighbor $u_1$, denoted by $B_0$ and $B_1$, respectively.
For $B_0$, the branching step removes $v^*$ and $u_1$, and we have $|C_0|+|X_0|\le (|C|-2)+|X|=n-2$. 
For $B_1$, since $u_1\in C_R\setminus C_R'$ by Rule (C2), either $\overline{d}(u_1, C_L)>2$ or $\overline{d}(u_1, X_L)>0$ holds, and in both cases, $|C_1|+|X_1|\le |C|+|X|-3= n-3$. 
Therefore,
\begin{equation}
T(n) \le T(n-2) + T(n-3).
\end{equation}

\smallskip\noindent\textbf{Case (3.2) - ${k=2}$.}
The algorithm produces three sub-branches, one on $v^*$ and one on each of its non-neighbors $u_1$ and $u_2$, denoted by $B_0$, $B_1$, and $B_2$, respectively. 
By Rule (C2), we assume $u_2\in C_R\setminus C_R'$. Let $l=\overline{d}(u_1, C_L)\ge 1$, since $u_1$ is non-adjacent to at least one vertex, i.e., $v^*$. We distinguish the following cases based on the value of $l$.

 \smallskip\noindent\textbf{Case (3.2.1) - $l=1$.}
We claim that no maximal biclique exists in $B_2$ since (i) $u_1\in X_2$ and (ii) for any biclique found in $B_2$, the vertex $u_1$ can always be added to form a strictly larger biclique, contradicting maximality.
Therefore, the sub-branch $B_2$ can be safely pruned. 
When $k=2$ and $l=1$, we have $|C_0|+|X_0|\le (|C|-3)+|X|=n-3$ and $|C_1|+|X_1|\le (|C|-2)+|X|=n-2$. 
Hence,
\begin{equation}
T(n) \le T(n-2) + T(n-3).
\end{equation}

\smallskip\noindent\textbf{Case (3.2.2) - $l\ge 2$.} Consider $B_0$. Given $k=2$, we have $|C_0|+|X_0|\le (|C|-3)+|X|=n-3$. Consider $B_1$. Given $l\ge 2$, we have $|C_1|+|X_1|\le (|C|-3)+|X|=n-3$. Consider $B_2$. $u_2$ is a candidate pivot since $u_2\in C_R\setminus C_R'$. Thus, either (i) $\overline{d}(u_2, C_L)>2$ or (ii) $\overline{d}(u_2, C_L)=2 \land \overline{d}(u_2, X_L)>0$ holds. Then $|C_2|+|X_2|\le |C|+|X|-4= n-4$. Formally, we have 
    \begin{equation}
        T(n) \le 2\cdot T(n-3) + T(n-4)
    \end{equation}

By solving all recurrences, the worst case is attained in Case (3.2.2), whose branching factor $\alpha_3$ is the same as that of Case (2.1).

\smallskip
\noindent\textbf{Conclusion.}
Let $\alpha=\max\{\alpha_1,\alpha_2,\alpha_3\}$.
Then the total number of recursive calls is bounded by $O(\alpha^n)$. Since each call can be done in $O(m)$ time, all maximal bicliques can be enumerated in $O(m\cdot \alpha^n + n\cdot \beta)$ time, where $\beta$ is the number of maximal bicliques.
\end{proof}
}

\begin{figure*}[t]
\centering
% \subfigure[\textsf{CR}]{
%     \includegraphics[width=0.24\textwidth]{figs/empty.png}}
% \subfigure[\textsf{FR}]{
%     \includegraphics[width=0.24\textwidth]{figs/empty.png}}
% \subfigure[\textsf{WK}]{
%     \includegraphics[width=0.24\textwidth]{figs/empty.png}}
% \subfigure[\textsf{ES}]{
%     \includegraphics[width=0.24\textwidth]{figs/empty.png}}
\subfigure[A 3-biplex graph $G$]{
\label{subfig:example}
    \includegraphics[width=0.3\textwidth]{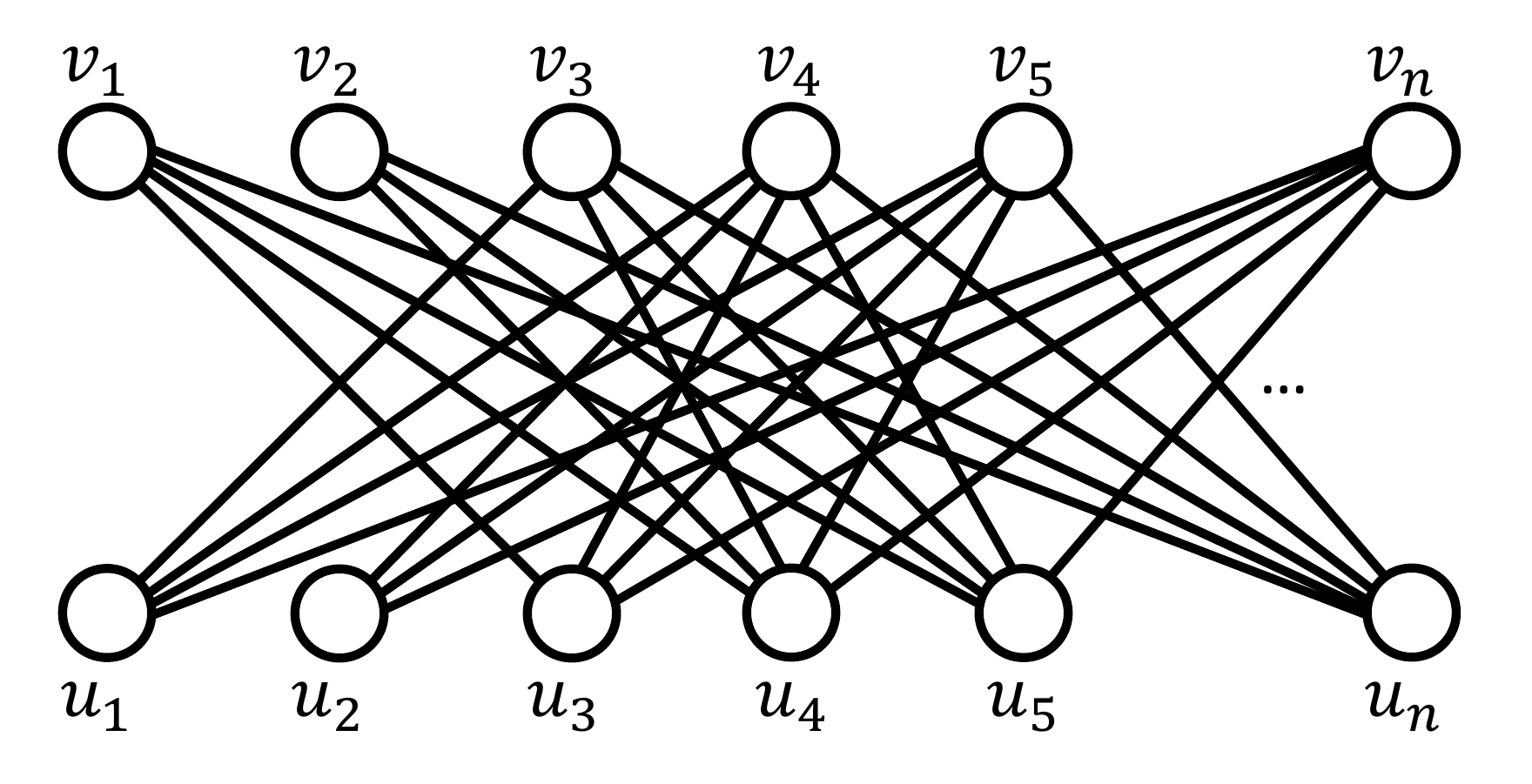}}
\hspace{0.5cm}
\subfigure[$G$'s complementary graph  $\overline{G}$]{
\label{subfig:example2}
    \includegraphics[width=0.3\textwidth]{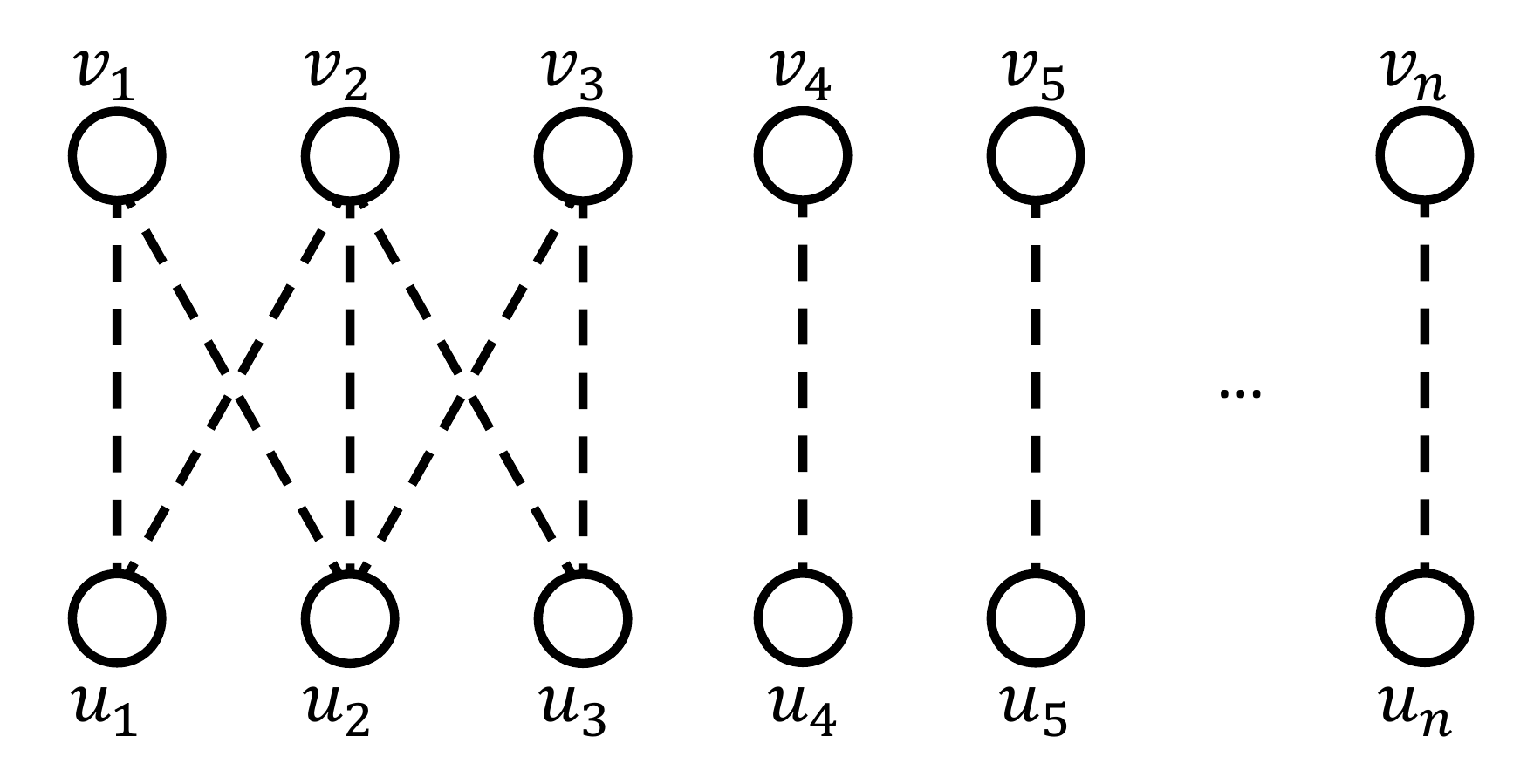}}
\hspace{0.5cm} 
\subfigure[The complementary of the candidate graphs after branching on $u_4$ and/or $v_4$ with the basic pivot strategy, i.e., $u_4$ is the pivot vertex.]{
\label{subfig:branch1}
    \includegraphics[width=0.255\textwidth]{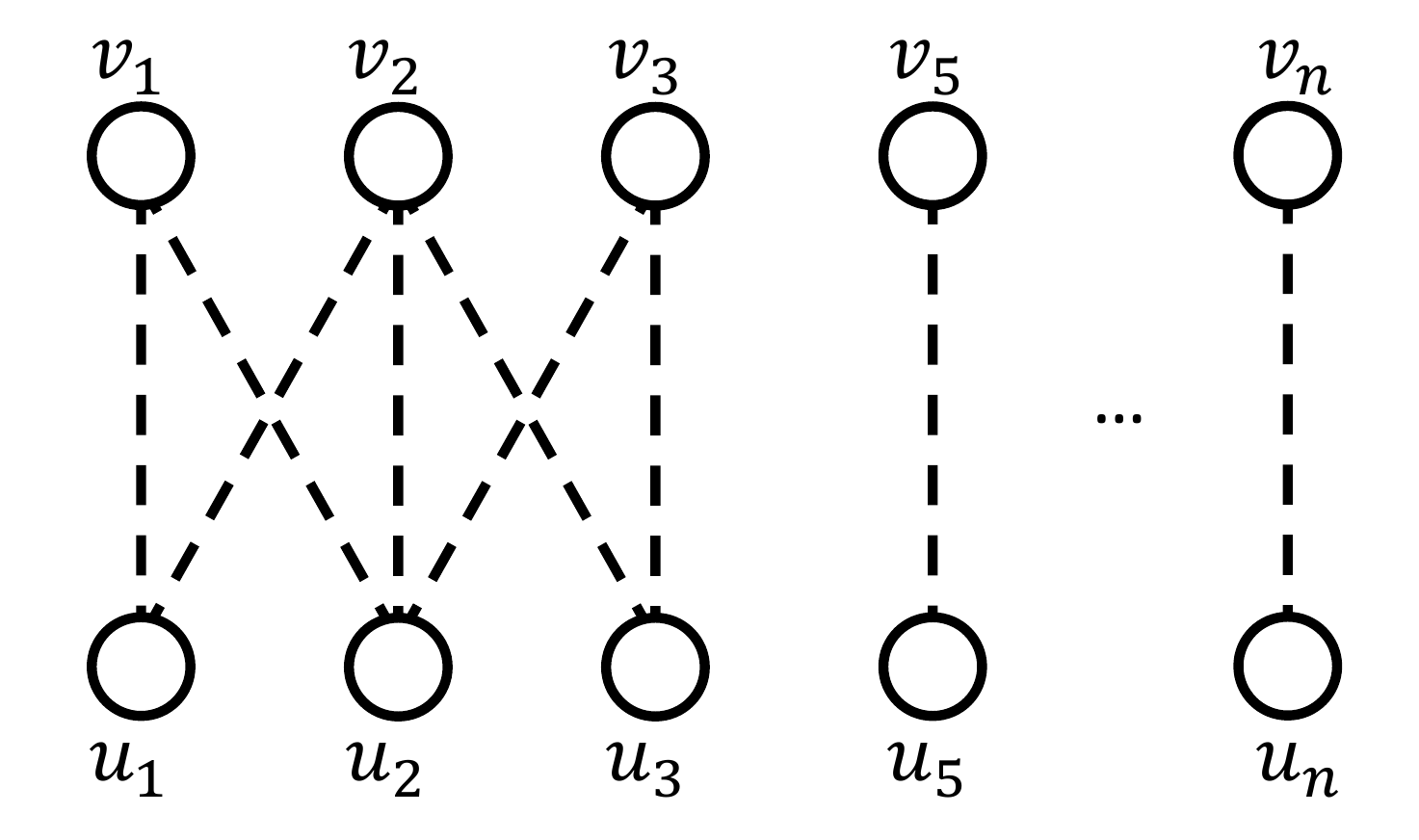}}
\\
\subfigure[The complementary of the candidate graph after branching on $u_1$ with another pivot strategy, e.g., $u_1$ is the pivot vertex.]{
\label{subfig:branch2}
    \includegraphics[width=0.255\textwidth]{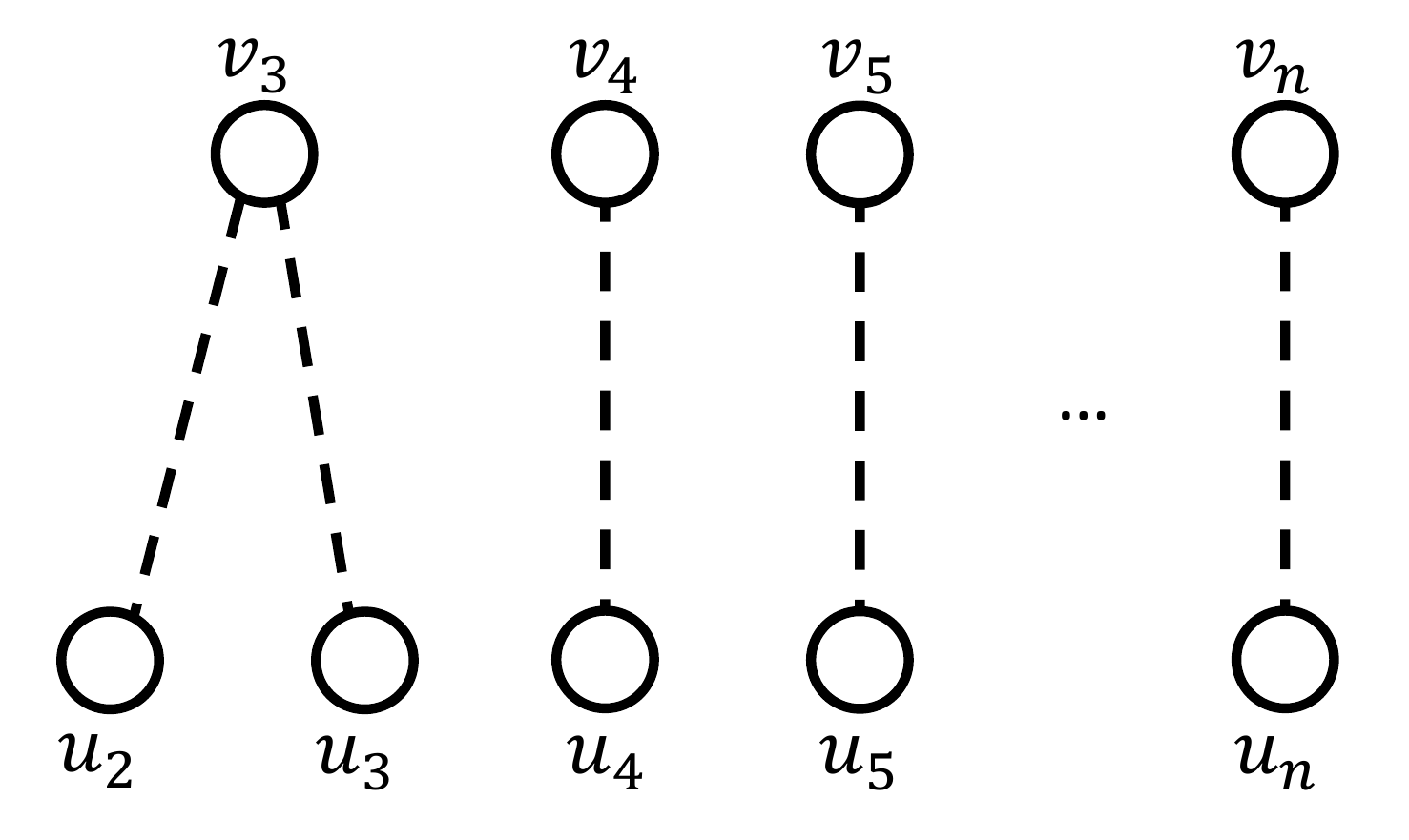}}
\hspace{0.8cm}
\subfigure[The complementary of the candidate graph after branching on $v_1$ with another pivot strategy, e.g., $u_1$ is the pivot vertex.]{
\label{subfig:branch3}
    \includegraphics[width=0.255\textwidth]{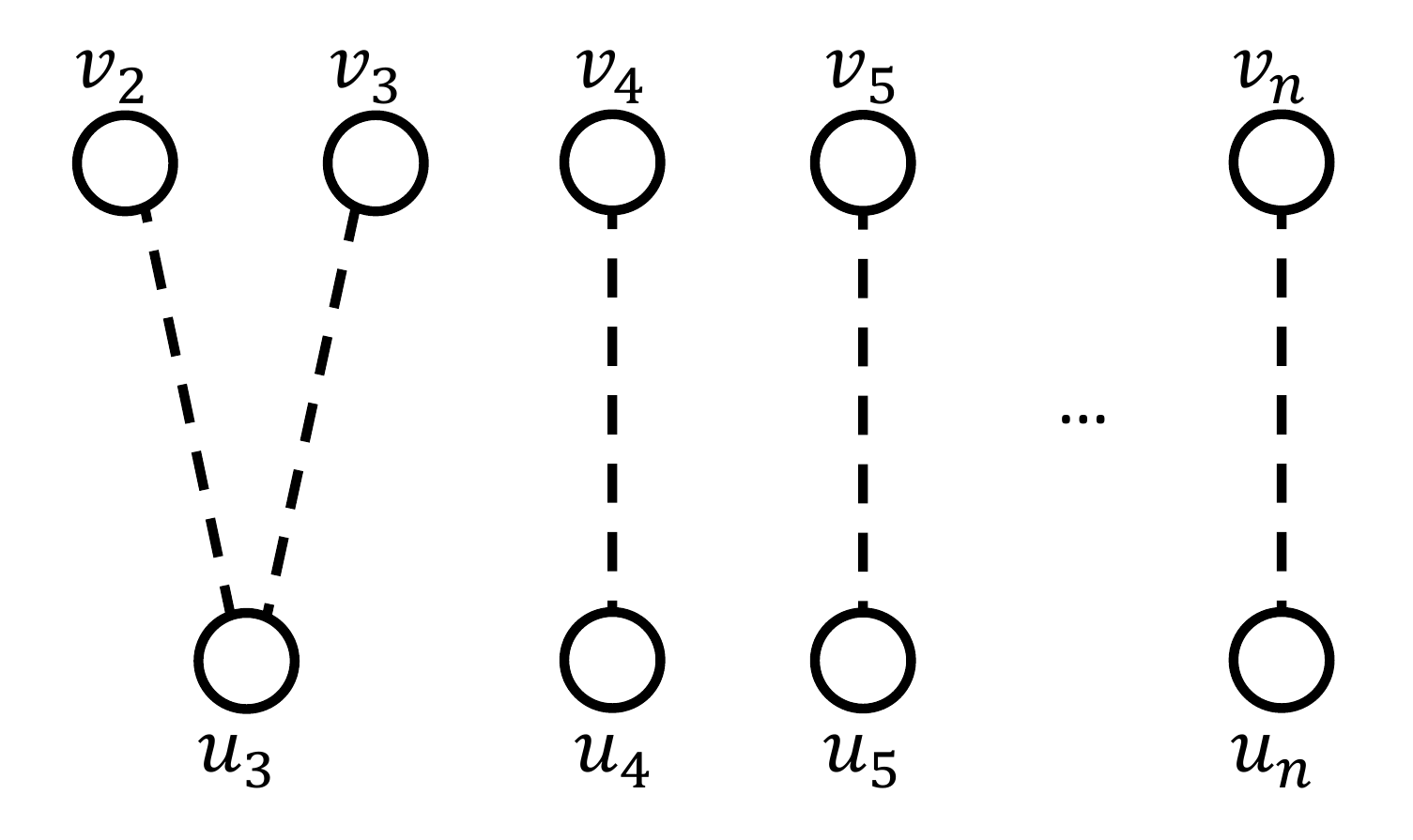}}
\hspace{0.8cm}
\subfigure[The complementary of the candidate graph after branching on $v_2$ with another pivot strategy, e.g., $u_1$ is the pivot vertex.]{
\label{subfig:branch4}
    \includegraphics[width=0.255\textwidth]{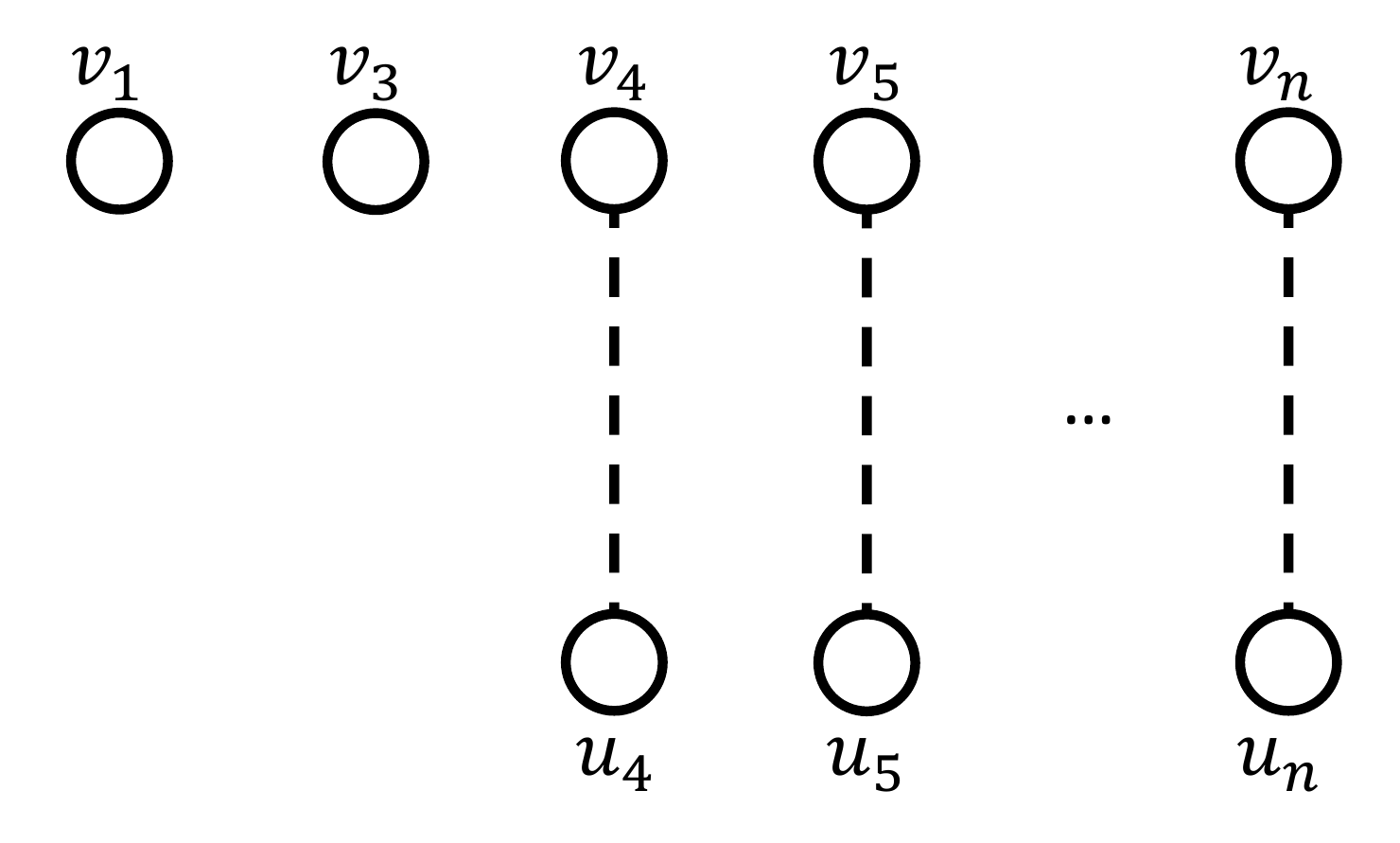}}
%\vspace{-3mm}
\caption{An illustration of the reason why the conventional pivoting strategy with the new stopping criterion would still produce much more branches than necessary.}
%\vspace{-3mm}
\label{fig:example}
\end{figure*}

\smallskip\noindent
\textbf{Remarks.} 
\underline{First}, compared with existing algorithms that target the maximal biclique enumeration problem \cite{chen2022efficient, dai2023hereditary, abidi2020pivot, chen2023maximal}, our algorithm offers several key advantages. Theoretically, while existing branch-and-bound algorithms \cite{chen2022efficient, dai2023hereditary, abidi2020pivot} all exhibit a worst-case time complexity of $O(m \cdot (\sqrt{2})^n)$, our approach provides a strictly better theoretical bound since both $O(m\cdot \alpha^n)$ and $O(n\cdot \beta)$ are bounded by $O(m\cdot (\sqrt{2})^n)$. We note that $\beta$ is at most $(\sqrt{2})^n - 2$ in the worst case \cite{prisner2000bicliques}, yet empirically, it is much smaller. Practically, our algorithm outperforms all existing algorithms, as confirmed by the experimental results.

{\Revise We emphasize that the termination condition and the pivoting strategy \emph{jointly} contribute to the theoretical improvement in the worst-case performance. Although each technique independently contributes to the overall empirical performance, neither is sufficient alone to achieve a substantial reduction in complexity. 
We also acknowledge that the excellent empirical performance of state-of-the-art algorithms and ours on real-world graphs primarily arises from the structural properties of such datasets (e.g., sparsity), rather than from the worst-case behavior. However, measuring such average-case complexity is challenging and even infeasible in many instances, which is a common issue in the field. Thus, identifying algorithmic bottlenecks and providing a solid foundation for future improvements is still essential.
}

\underline{Second}, recall that in Section~\ref{sec:intro}, we can directly build upon the existing BB algorithm~\cite{dai2023hereditary} by just integrating the our new stopping criterion with the existing BB algorithms and retaining the conventional pivoting strategy. However, the conventional pivoting strategy is not well-aligned with the new stopping criterion and would still generate much more branches than necessary, i.e., $O((\sqrt{2})^n)$ branches. Therefore, this design still has the same worst-case time complexity as that of \cite{dai2023hereditary}. We illustrate this with the following example.

\begin{example}
    Consider a bipartite graph $G$ in Figure~\ref{subfig:example} whose complementary graph $\overline{G}$ is shown in Figure~\ref{subfig:example2}. We cannot directly apply the termination technique to $G$ since $G$ is a 3-biplex (i.e., $u_2$ and $v_2$ have 3 non-neighbors on their opposite side, respectively). Therefore, the conventional pivoting strategy would choose a vertex with the minimum number of non-neighbors as the pivot, e.g., $u_4$ is selected as the pivot vertex. Branching steps would be conducted on $u_4$ and its non-neighbor $v_4$. We show the complementary graph of the candidate graphs after branching on $u_4$ and/or $v_4$ in Figure~\ref{subfig:branch1}. We note that these two candidate graphs are identical. We still cannot apply the early-termination technique on either of them. The procedure would continue recursively, choosing $u_5, u_6, \cdots, u_n$ as pivots, until the candidate graph only has $u_1, u_2, u_3, v_1, v_2$ and $v_3$ inside. Then we can apply the early-termination technique on the candidate graphs after the branching steps on either of a vertex and its non-neighbors. Thus, the algorithm with the conventional pivoting strategy would produce $O(2^n)$ branches in this example graph. However, our proposed pivoting strategy, would create $O(1)$ branches, regardless of how large the value of $n$ is. That is, we skip all those $u_i$ and $v_i$ for $i\ge 4$ and choose $u_1$ as the pivot vertex. Branching steps would be conducted on $u_1$ and its non-neighbors $v_1$ and $v_2$. We show the complementary of the candidate graphs after branching on $u_1$, $v_1$ and $v_2$ in Figure~\ref{subfig:branch2}, Figure~\ref{subfig:branch3} and Figure~\ref{subfig:branch4}, respectively. Since all these three branches are 2-biplexes, we can directly apply the early-termination technique and enumerate the maximal bicliques inside in nearly optimal time. 
\end{example}

\if 0
\underline{Third}, we notice that the delay, i.e., the longest running time between any two consecutive outputs, is often used for evaluating enumeration algorithms \cite{zhang2014finding, yu2022efficient}. 
Specifically, there exist several algorithms that could provide a guarantee of polynomial delay for maximal biclique enumeration problem \cite{dai2023hereditary, chen2022efficient}. 
Our algorithm, unfortunately, cannot provide such guarantee, i.e., the delay could be exponential with respect to the size of the bipartite graph. The reason is that the early-termination would output the maximal bicliques in a batch manner, and the delay between two consecutive batches is exponential due to the backtracking nature of our algorithm. Besides, we remark that all algorithms cannot achieve the polynomial delay under the setting of non-trivial size constraints, i.e., $\tau_L>1$ and/or $\tau_R>1$.  
\fi

{\Revise \underline{Third}, the delay, i.e., the maximum running time between two consecutive outputs, is a standard measure for evaluating enumeration algorithms \cite{zhang2014finding, yu2022efficient}. 
In the literature on maximal subgraph enumeration, including the problem studied in this paper, most state-of-the-art solutions adopt a backtracking (a.k.a.\ branch-and-bound) framework \cite{wang2025maximal, abidi2020pivot, chen2023maximal}. 
This framework is widely used because it runs much faster on real-world graphs and does not require expensive overheads for maintaining complex auxiliary data structures.
Our algorithm follows this paradigm and therefore enjoys the same characteristics. As a result, it cannot guarantee polynomial delay in theory. 
In particular, the early-termination mechanism outputs maximal bicliques in batches, and the delay between two consecutive batches can be exponential in the worst case. 
Moreover, under non-trivial size constraints (i.e., $\tau_L>1$ and/or $\tau_R>1$), no existing algorithm can achieve polynomial delay in the worst case.
Although polynomial-delay algorithms for maximal subgraph enumeration do exist \cite{dai2023hereditary, chen2022efficient}, such methods typically incur more overhead and are rarely competitive in practice.

}

\underline{Finally}, our partition-based pivoting strategy can also be applied to other enumeration task, such as maximal clique enumeration. The idea of the pivot selection is similar but the difference comes from the fact that the graph is unipartite in maximal clique enumeration. Therefore, it would not be necessary to distinguish which side of a vertex and its non-neighbors. Similarly, the theoretical result can also be achieved for maximal clique enumeration. We shall leave this as a future work.

\section{Efficiency Boosting Techniques}
\label{sec:boost}

In this section, we apply a widely-used decomposition technique, namely \emph{inclusion-exclusion based framework} \cite{chen2021efficient, conte2018d2k, yu2023maximum} (\texttt{IE} in short) to further boosts the efficiency and scalability of the algorithms introduced in the Section~\ref{sec:BB-v2}.

% \subsection{Inclusion-exclusion based framework}

Given a subgraph enumeration task on a graph $G$, the idea of the framework is that it first decomposes the whole graph into multiple smaller subgraphs, then the algorithm runs on each graph instance, and finally returns the answer by considering the results from each graph instance. For our problem, given a graph $G=(L\cup R, E)$, we consider an arbitrary ordering over the vertices in $L$, which we denote by $\langle v_1, v_2, \cdots, v_{|L|}\rangle$. 
{\Revise Let $N_1(v_i)$ and $N_2(v_i)$ denote the sets of 1-hop and 2-hop neighbors of $v_i$, respectively. All maximal bicliques in $G$ can be enumerated exactly once from each induced graph $G_i = G[L_i \cup R_i]$ for $i = 1, \dots, |L|$, where $L_i$ is the set of $v_i$'s 2-hop neighbors, excluding $\{v_1, \dots, v_{i-1}\}$, and $R_i$ is the set of $v_i$'s 1-hop neighbors.}
% $L_i = N_2(v_i) \setminus \{v_1, \cdots, v_{i-1}\}$, $R_i =  N_1(v_i)$ and $E_i = \{(v, u) \mid v\in L_i \land u\in R_i \land (v, u)\in E\}$ for $1\le i \le |L|$.
% \begin{equation}
% \begin{aligned}
%     L_i &= N_2(v_i) \setminus \{v_1, \cdots, v_{i-1}\} \quad
%     R_i =  N(v_i, R)\\
%     E_i &= \{(v, u) \mid v\in L_i \land u\in R_i \land (v, u)\in E\} \\
% \end{aligned}
% \end{equation}
% where $N_2(v_i)$ denotes the set of all $v_i$'s 2-hop neighbors. 
% Note that all $v_i$'s 2-hop neighbors must reside in $L$ since $G$ is a bipartite graph and $v_i\in L$. 
The correctness of the formulation can be found in  \cite{chen2022efficient}. 
With this formulation, our branch-and-bound algorithms can solve each graph instance if the sets $S$, $C$ and $X$ are initialized properly. Formally, for the graph instance $G_i$ for $1\le i\le |L|$, 
\begin{equation}
\begin{aligned}
    \label{eq:cx}
    S_i = \{v_i\} \quad&\quad C_i = (N_1(v_i)\cup N_2(v_i)) \setminus \{v_1, \cdots, v_i\} \\
    X_i &= N_2(v_i) \cap \{v_1, \cdots, v_{i-1}\}
\end{aligned}
\end{equation}

We note that with this framework, all algorithms introduced in Section~\ref{sec:BB-v2} can be improved. The reason is as follows. {\Revise It can be verified that $C_i$ and $X_i$ are mutually exclusive, and their union is given by $(N_1(v_i) \cup N_2(v_i)) \setminus \{v_i\}$. We define a variable
\begin{equation}
\label{eq:gamma}
    \gamma = \max_{i} \bigl(|C_i| + |X_i|\bigr) = \max_{v_i \in L} \bigl(|N_1(v_i)| + |N_2(v_i)| - 1\bigr),
\end{equation}
which represents the size of the largest instance produced from the root branch.}
% $|C_i|+|X_i|$ contains all $v_i$'s 1-hop and 2-hop neighbors excluding itself. 
% We define $\gamma = \max_{i} (|C_i|+|X_i|)$. 
Theoretically, $\gamma$ is strictly smaller than $n$ since $\gamma \le \min\{n-1, \Delta^2+\Delta-1\}$, where $\Delta$ is the maximum degree of a vertex in $G$. 
{\Revise Practically, $\gamma \ll n$ due to the sparse nature of real-world graphs, as verified in the experiments.}
We summarize the time complexity improvements as follows.

\begin{theorem}
    Given a bipartite graph $G$ with $n$ vertices and $m$ edges. The time complexity of \texttt{IPS} with \texttt{IE}-based framework would be $O(n\gamma^2\cdot \alpha^\gamma + \gamma \cdot \beta)$, where $\gamma$ can be computed by Eq.~(\ref{eq:gamma}). 
\end{theorem}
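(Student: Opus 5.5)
The plan is to reduce the statement to Theorem~\ref{theo:ps+}, applied separately to each of the $|L|\le n$ instances produced by the \texttt{IE} decomposition, and then sum the costs.

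\emph{Step 1: per-instance branching bound.} For the $i$-th instance, \textsf{ParBB\_Rec} is called on $(S_i,C_i,X_i)$ as in Eq.~(\ref{eq:cx}). Its recursion tree is bounded by the recurrence analysis in the proof of Theorem~\ref{theo:ps+}. That analysis only uses the parameter $|C|+|X|$ and the local branching rules, namely the partition-based pivoting and Rules (P1), (P2). It never relies on $S$ being empty at the root. The one check needed here is that the stopping criterion, pruning rules and pivot candidates remain valid when $S_i=\{v_i\}$ and $X_i\neq\emptyset$ initially. This should hold because Case (1) and Case (2.1) already handle $X\neq\emptyset$. Hence instance $i$ generates $O(\alpha^{|C_i|+|X_i|})\subseteq O(\alpha^{\gamma})$ branches, using the definition of $\gamma$ in Eq.~(\ref{eq:gamma}).

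\emph{Step 2: per-branch cost.} I will argue that all work inside instance $i$ is confined to $G_i=G[L_i\cup R_i]$. Every vertex of $C_i\cup X_i\cup S_i$ lies within $N_1(v_i)\cup N_2(v_i)\cup\{v_i\}$, so the graph has at most $\gamma+1$ vertices and hence at most $O(\gamma^2)$ edges. The $O(m)$ per-call bound from the complexity analysis before Theorem~\ref{theo:ps+} therefore becomes $O(\gamma^2)$. Constructing $G_i$ and the sets of Eq.~(\ref{eq:cx}) takes $O(\gamma^2)$ time per instance via 2-hop traversal, which is also dominated. Two costs remain. Outputting at the leaves is handled by the 2-biplex lemma: a leaf with $\beta_j$ maximal bicliques costs $O(\gamma^2+\gamma\beta_j)$, because each biclique has at most $\gamma+1$ vertices. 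The $O(\gamma^2)$ term is absorbed into the branch count. So instance $i$ costs $O(\gamma^2\alpha^{\gamma}+\gamma\beta^{(i)})$.

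\emph{Step 3: summation.} By the exactly-once property of the \texttt{IE} formulation cited from \cite{chen2022efficient}, $\sum_i\beta^{(i)}=\beta$. Summing over at most $n$ instances gives $O(n\gamma^2\alpha^{\gamma}+\gamma\beta)$.

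I expect the main obstacle to be the subtle point in Step~1. The leaf outputs must be bicliques that are maximal in $G$, not merely maximal in $G_i$; this depends on $X_i$ encoding the earlier vertices $v_1,\dots,v_{i-1}$ correctly. In addition, $\beta^{(i)}$ must count exactly the bicliques whose first $L$-vertex in the ordering is $v_i$. I would justify both by appealing to the correctness argument of \cite{chen2022efficient} together with the stopping condition's requirement that $X=\emptyset$, rather than reproving them.
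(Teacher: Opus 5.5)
Your proposal is correct and follows the same route as the paper, which gives no formal proof and only notes that each \texttt{IE} instance has $|C_i|+|X_i|\le\gamma$: applying Theorem~\ref{theo:ps+}'s recurrence per instance gives $O(\alpha^{\gamma})$ branches at $O(\gamma^2)$ cost each, the leaf outputs total $O(\gamma\beta)$ by the exactly-once property, and summing over at most $n$ instances yields the bound. One wording fix: the work is confined to $G[N_1(v_i)\cup N_2(v_i)]$ rather than to $G_i$, since the vertices of $X_i$ are excluded from $L_i$; your vertex count already uses this larger set, so the bound is unaffected.
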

% For \texttt{BPS}, \texttt{BPS+} and \texttt{IPS}, the time complexities would reduce from $O(m\cdot 1.4142^n)$ and  $O(m\cdot 1.3953^n + n \cdot \beta)$ to $O(n\gamma^2 \cdot 1.4142^\gamma)$ and $O(n\gamma^2\cdot 1.3953^\gamma + \gamma \cdot \beta)$, respectively.

\begin{table*}[th]
    \centering
    %\vspace{-2mm}
    \caption{Dataset Statistics. All datasets are reported with $|L|\le |R|$ and are sorted based on $|E|$. }
    %\vspace{-2mm}
    \label{tab:data}
    \begin{tabular}{r|cC{1.5cm}C{1.5cm}C{1.5cm}|ccC{1.2cm}C{1.2cm}}
    \hline
        \textbf{Graph (Name)} & Category & $|L|$ & $|R|$ & $|E|$ & Density & $\beta$ & $\Delta$ & $\gamma$ \\
    \hline
        % \textsf{crime (CR)} & Interaction & 551 & 829 & 1,476 & 2.13 & 620 & 25 & 70 \\
        \textsf{forum (FR)} & Interaction & 522 & 899 & 7,089 & 9.97 &16,261 & 126 & 527\\    
        \textsf{ukwiki (WK)} & Authorship & 779  & 16,699 & 32,615 & 3.73 & 23,772  & 9,254 & 9,734 \\
        \textsf{escorts (ES)} & Rating & 6,625 & 10,107 & 39,044 & 4.66 & 49,538 & 305 & 1,263 \\
    \hline
        \textsf{{youtube (YT)}} & Affiliation & 30,087 & 94,238 & 293,360 & 4.71 & 1,826,587  & 7,591 & 14,947 \\
        \textbf{\textsf{github (GH)}} & Authorship & 56,519 & 120,867 & 440,237 & 4.96 & 55,346,398 & 3,675 & 33,324\\
        \textbf{\textsf{{bkcrossing (BC)}}} & Interaction & 105,278 & 340,523 & 1,149,739 & 5.15 & 54,458,953 & 13,601 & 67,516 \\
        \textsf{stackoverflow (SO)} & Rating & 96,678 & 545,195 & 1,301,942 & 4.05 & 3,320,824 & 6,119 & 36,412 \\
        \textsf{actors (AC)} & Affiliation & 127,823 & 383,640 & 1,470,404 & 5.74 & 1,075,444 & 646 & 4,200\\
        \textbf{\textsf{twitter (TW)}} & Interaction & 175,214 & 530,418 & 1,890,661 & 5.35 & 5,536,526 & 19,805 & 143,233\\
        \textsf{bibsonomy (BS)} & Feature & 204,673 & 767,447 & 2,499,057 & 5.14 & 2,507,269 & 182,673 & 191,911 \\
         \textsf{imdb (IM)} & Affiliation & 303,617  & 896,302 & 3,782,463 & 6.30 & 5,160,061 & 1,590 & 16,133\\
         \textsf{{amazon (AZ)}} & Rating & 1,230,915 & 2,146,057 & 5,743,258 & 3.40 & 7,551,358 & 12,180 & 115,066 \\
        \textbf{\textsf{dblp (DB)}} & Authorship & 1,953,085  & 5,624,219 & 12,282,059 & 3.24 & 4,899,032 & 1,386 & 3,050 \\  
\hline
    \textsf{tvtropes (TV)} & Feature  & 64,415 & 87,678 & 3,232,134  & 42.50 & 19,636,996,096 & 12,400 & 49,893 \\ 
    \textsf{livejournal (LJ)} &  Affiliation & 3,201,203  & 7,489,073  & 112,307,385  & 21.01 & $>10^7$B & 1,053,676 & 3,695,926 \\ 
    \hline
    \end{tabular}
    % \vspace{-2mm}
\end{table*}

\smallskip\noindent
\textbf{Remark.} \underline{(1)} Compared with the algorithm without inclusion-exclusion based technique, when the graph satisfies $n\ge \gamma + 6.01 \cdot \ln \gamma$ (this condition is satisfied by all existing real-world bipartite graphs due to their sparse nature), the time complexities of the algorithms with \texttt{IE} is strictly bounded by those of the algorithms without \texttt{IE}. To see this, 
% we target the case for \texttt{IPS} and omit the case for \texttt{BPS+} as it can be analyzed similarly. Let $\alpha = 1.3953$, 
\begin{equation}
\begin{aligned}
 n \ge \gamma + 6.01 \cdot\ln \gamma \quad & \Rightarrow \quad  n \ln \alpha \ge \gamma \ln \alpha + 6.01\cdot\ln \gamma \ln \alpha \\ 
&\Rightarrow \quad  n \ln \alpha > \gamma \ln \alpha + 2 \ln \gamma \\ 
&\Rightarrow \quad  n \ln \alpha + \ln \frac{m}{n} > \gamma \ln \alpha + 2 \ln \gamma \\ 
% \Rightarrow \quad & \frac{m}{n} \cdot \alpha^n > \gamma^2\cdot \alpha^\gamma \\
&\Rightarrow \quad  m \cdot \alpha^n > n\cdot \gamma^2\cdot \alpha^\gamma \\
\end{aligned}
\end{equation}
The first inequality is an equivalent transformation because a positive value, 
$\ln \alpha$, is multiplied by both sides of the inequality. The second inequality follows from the fact that $6.01\cdot\ln\alpha > 2$. The third inequality holds since a positive value, $\ln(m/n)$, is added to the left side of the inequality. The last inequality is also an equivalent transformation by taking the exponents on both sides of the inequality.
\underline{(2)} As indicated in several subgraph enumeration problems \cite{dai2023hereditary, wang2024efficient, chen2022efficient}, the theoretical and practical performance of the branch-and-bound algorithm are usually sensitive to the vertex ordering. However, it depends in our case. 
Consider the theoretical performance. The vertex ordering would not affect our theoretical outcomes since the exponents mainly depend on $|C|+|X|$ but not solely on $|C|$ in our analysis. Therefore, those vertex ordering algorithms, which target minimizing the largest size of the candidate set after branching on the initial branch, e.g., degeneracy ordering \cite{matula1983smallest} or unilateral ordering \cite{chen2022efficient}, will not improve the theoretical performance of our algorithm. 
Consider the practical performance. As verified in the experiments, we observe that running time varies as different orderings are applied. Therefore, we consider two variants of our algorithms, one using the degeneracy ordering and the other using unilateral ordering in our experiments. The details of the experimental setting will be shown in Section~\ref{sec:exp}.

% Given a graph instance $G_i$ for $1\le i\le |L|$, we construct the branch $B_i$ as follows. 

\section{Experiments}
\label{sec:exp}

\begin{table*}[th]
    % \vspace{-2mm}
    \centering
    \caption{Comparison among variants of our algorithm  (unit: seconds by default). The bold and underlined values indicate the best and the second best, respectively (similar notations are used for other tables). }
    %\vspace{-2mm}
    \label{tab:variant}
    \begin{tabular}{r|C{1.2cm}C{1.2cm}C{1.2cm}|C{1.2cm}C{1.2cm}C{1.2cm}|C{1.2cm}C{1.2cm}C{1.2cm}}
    \hline
    % \multirow{2}{*}{} & \multicolumn{6}{c|}{$k\ge\beta$ (i.e., Maximal Biclique Enumeration)} \\
    % \cline{2-14}
        & \texttt{BCEA} & \texttt{BPS+} & \texttt{IPS} & \texttt{BCEAD} & \texttt{BPSD+} & \texttt{IPSD} & \texttt{BCEAH} & \texttt{BPSH+} & \texttt{IPSH} \\
    \hline
        % \textsf{CR} & 1 & 2 & 3 &  1 & 2 & 3 & 1 & 2 & 3\\
        \textsf{FR} & 29ms & 28ms & 23ms &  21ms & 19ms & \underline{18ms} & 20ms & 18ms & \textbf{16ms}\\    
        \textsf{WK} & 248ms & 192ms & 153ms &  121ms & 73ms & \underline{43ms} & 127ms & 53ms & \textbf{31ms} \\
        \textsf{ES} & 125ms & 118ms & 108ms &  94ms & 77ms & \underline{59ms} & 88ms & 76ms & \textbf{53ms}\\
    \hline
        \textsf{YT} & 17.75 & 6.61 & 5.77 & 7.65 & 2.59 & \textbf{2.29} & 5.92 & 2.59 & \underline{2.31} \\
        \textsf{GH} & 89.63 & 85.79 & 84.40 & 74.00 & 63.44 & \textbf{52.34} & 72.70 & 63.95 & \underline{52.95}\\  
        \textsf{BC} & 324.19 & 184.52 & 166.71 & 308.08 & 172.06 & \underline{129.52} & 303.33 & 161.61 & \textbf{115.09} \\ 
        \textsf{SO} & 62.89 & 28.00 & 23.96 & 146.50 & 18.16 & \underline{13.81} & 97.48 & 15.93 & \textbf{12.59} \\
        \textsf{AC} & 8.71 & 8.46 & 8.01 & 4.82 & 3.50 & \underline{2.72} & 4.94 & 3.52 & \textbf{2.55}\\
        \textsf{TW} & 500.68 & 338.12 & 259.29 & 243.28 & 27.62 & \underline{23.77} & 109.92 & 26.80 & \textbf{18.47}\\
        \textsf{BS} & 85.79 & 80.88 & 43.62 &  17.60 & 16.12 & 15.09 & 17.89 & \underline{14.61} & \textbf{13.13}\\
        \textsf{IM} & 67.37 & 46.24 & 43.86 & 57.31 & 18.32 & \textbf{17.23} & 52.91 & 19.02 & \underline{18.20}\\
        \textsf{AZ} & 594.11 & 190.69 & 136.75 & 471.45 & 156.72 & \underline{54.23} & 236.60 & 132.50 & \textbf{35.08} \\
        \textsf{DB} & 16.00 & 15.72 & 12.38 &  13.62 & 11.37 & \textbf{7.92} & 14.92  & 11.85 & \underline{8.03}\\
    \hline
        % \textsf{TV} & - (13B) & - () & - () & 86134.47 & 34962.46 & \textbf{33259.40} & 85104.24 & 36494.63 & \underline{34923.54}\\
        % \textsf{LJ} & - (12B) & - () & - () & - (16B) & - (33B) & \underline{- (45B)} & - (18B) & - (35B) & \textbf{- (75B)}\\ 
        \textsf{TV} & - (13B) & 20.7h & 18.2h & 23.9h & 9.7h & \textbf{9.2h} & 23.6h & 10.1h & \underline{9.7h} \\
        \textsf{LJ} & - (12B) & - (17B) & - (29B) & - (16B) & - (33B) & \underline{- (65B)} & - (18B) & - (35B) & \textbf{- (75B)}\\ 
    \hline
    \end{tabular}
       % \vspace{-2mm}
\end{table*}

\subsection{Experimental setup}
\label{subsec:setup}

\smallskip\noindent
\textbf{Datasets.} We use 15 real datasets in our experiments, which can be obtained from \url{http://konect.cc/}. For each graph, we ignore the weights and multiple edges~\footnote{The multiple edges represent those that are duplicated themselves multiple times in a graph.} at the very beginning. 
Given a graph $G$, in addition to the number of vertices and edges inside, we also collect the statistics and report the edge density, which is defined as $\rho = 2\times |E|/(|L|+|R|)$, the maximum degree $\Delta$, the number of all maximal bicliques $\beta$, and the value of $\gamma$ introduced in Section~\ref{sec:boost}, which are summarized in Table~\ref{tab:data}. We use four representative datasets as default ones, shown in bold in Table~\ref{tab:data}, i.e., \textsf{GH}, \textsf{BC}, \textsf{TW} and \textsf{DB}, since they cover different graph scales. 

\smallskip\noindent
\textbf{Baselines.} \underline{First}, we introduce three variants of our algorithms. 
Specifically, we first consider the simple version \texttt{IPS} (i.e., the branch-and-bound algorithms without \texttt{IE}). In addition, for \texttt{IPS} with \texttt{IE}, recall that in Section~\ref{sec:boost}, we consider using degeneracy ordering and unilateral ordering to branch the initial branch. Therefore, we denote \texttt{IPS} with degeneracy ordering and unilateral ordering by \texttt{IPSD} and \texttt{IPSH}, respectively. 
% Since this study addresses the top-$k$ maximum biclique search problem, we compare our algorithm with various baselines for different values of $k$. The details are elaborated as follows. 
% \underline{First}, when $k\ge \beta$, the problem reduces to maximal biclique enumeration problem. Therefore, 
\underline{Second}, we consider five existing algorithms, namely \texttt{BCEAD} \cite{dai2023hereditary}, \texttt{BCEAH} \cite{dai2023hereditary}, \texttt{BCDelay} \cite{dai2023hereditary}, \texttt{MBET} \cite{chen2023maximal} and \texttt{MBETM} \cite{chen2023maximal}. The first two algorithms are the branch-and-bound algorithms adopting the basic pivot selection strategy, with variations in vertex ordering. The latter three baselines are the branch-and-bound algorithms that achieve polynomial delay. They differ primarily in their implementation details and the space organization.

\smallskip\noindent
\textbf{Settings.} When enumerating all maximal bicliques (i.e., $\tau_L=\tau_R=1$), we compare our algorithms \texttt{IPSD} and \texttt{IPSH} with all baseline methods. For size-constrained maximal bicliques (i.e., $\tau_L>1$ and/or $\tau_R>1$), we exclude \texttt{MBET} and \texttt{MBETM} from consideration because the prefix tree structure proposed in \cite{chen2023maximal} is incompatible with the pruning rules introduced in Section~\ref{subsec:summary}, i.e., size-constrained maximal bicliques can only be obtained after enumerating all maximal bicliques and filtering out the ineligible ones. All algorithms are implemented in C++, and experiments are conducted on a Linux machine with a 2.10GHz Intel CPU and 128GB memory. We set the time limit to 24 hours (i.e., 86,400 seconds). For any algorithm that exceeds this time limit, (1) the running time is recorded as ``-'' or ``INF'' and (2) the number of maximal bicliques that have already been enumerated within the 24-hour period is reported. For each algorithm, we report the average result over 5 runs. Implementation codes and datasets can be found via this link \url{https://github.com/wangkaixin219/IPS}.

\subsection{Experimental results}
\label{subsec:result}

\begin{table}[t]
\small
    \centering
    \caption{Comparison among all baselines when enumerating all maximal bicliques (unit: seconds by default). ``$\times$'' indicates that the algorithm is terminated due to the ``Out Of Memory''. }
    %\vspace{-2mm}
    \label{tab:baseline}
    \begin{tabular}{r|ccccccc}
    \hline
        & \texttt{IPSH} & \texttt{IPSD} & \texttt{BCEAH} & \texttt{BCEAD} & \texttt{BCDelay} & \texttt{MBET} & \texttt{MBETM} \\
    \hline
        % \textsf{CR} &  &  &  &   &  &  & \\
        \textsf{FR} & \textbf{16ms} & \underline{18ms} & 20ms & 21ms & 26ms & 19ms & 19ms \\    
        \textsf{WK} & \textbf{31ms} & \underline{43} & 127ms & 121ms & 170ms & 62ms & 66ms \\
        \textsf{ES} & \textbf{53ms} & \underline{59ms} & 88ms & 94ms & 98ms & 72ms & 79ms \\
    \hline
        \textsf{YT} & \underline{2.31} & \textbf{2.29} & 5.92 & 7.65 & 8.47 & 5.06 & 5.24 \\
        \textsf{GH} & \underline{52.95} & \textbf{52.34} & 72.70 & 74.00 & 81.29 & 202.49 & 255.01 \\
        \textsf{BC} & \textbf{115.09} & \underline{129.52} & 303.33 & 308.08 & 335.72 & 219.32 & 222.84 \\
        \textsf{SO} & \textbf{12.59} & 13.81 & 97.48 & 146.50 & 239.18 & \underline{12.74} & 16.90 \\
        \textsf{AC} & \textbf{2.55} & \underline{2.72} & 4.94 & 4.82 & 5.94 & 3.33 & 3.30  \\
        \textsf{TW} & \textbf{18.47} & \underline{23.77} & 109.92 & 243.28 & 192.37 & 48.86 & 57.11 \\
        \textsf{BS} & \underline{13.13} & 15.09 & 17.89 & 17.60 & 22.08 & \textbf{12.21} & 14.43 \\
        \textsf{IM} & \underline{18.20} & \textbf{17.23} & 52.91 & 57.31 & 60.75 & 22.97 & 25.20 \\
        \textsf{AZ} & \textbf{35.08} & 54.23 & 236.60 & 471.45 & 892.14 & \underline{48.34} & 59.04 \\
        \textsf{DB} & \underline{8.03} & \textbf{7.92} & 14.92 & 13.62 & 12.63 & 15.29 & 18.32 \\
    \hline
        \textsf{TV} & \underline{9.7h} & \textbf{9.2h} & 23.6h & 23.9h & - (17B) & $\times$ & - (6B) \\
        \textsf{LJ} & \textbf{- (75B)} & \underline{- (65B)} & - (18B) & - (16B) & - (15B) & $\times$ & - (2B) \\
        % \textsf{TV} & \underline{34923.54} & \textbf{33259.40} & 85104.24 & 86134.47 & - (13.3B) & $\times$ & - (5.7B) \\
        % \textsf{LJ} & - (75.3B) & - (45.3B) & - () & - () & - () & $\times$ & - (2.1B) \\
    \hline
    \end{tabular}
    %\vspace{-4mm}
\end{table}

\begin{figure}[t]
%\vspace{-2mm}
\begin{center}
    \includegraphics[width=0.8\linewidth]{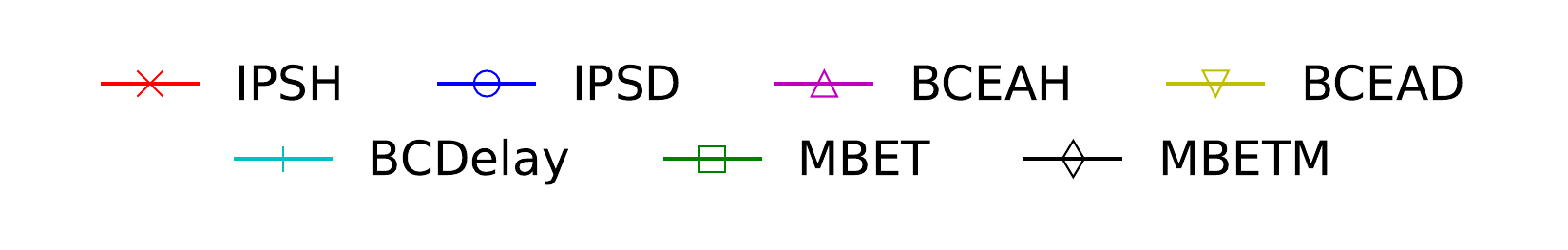}
\end{center}
\vspace{-4mm}
\subfigure[\textsf{GH}]{
    \includegraphics[width=0.23\textwidth]{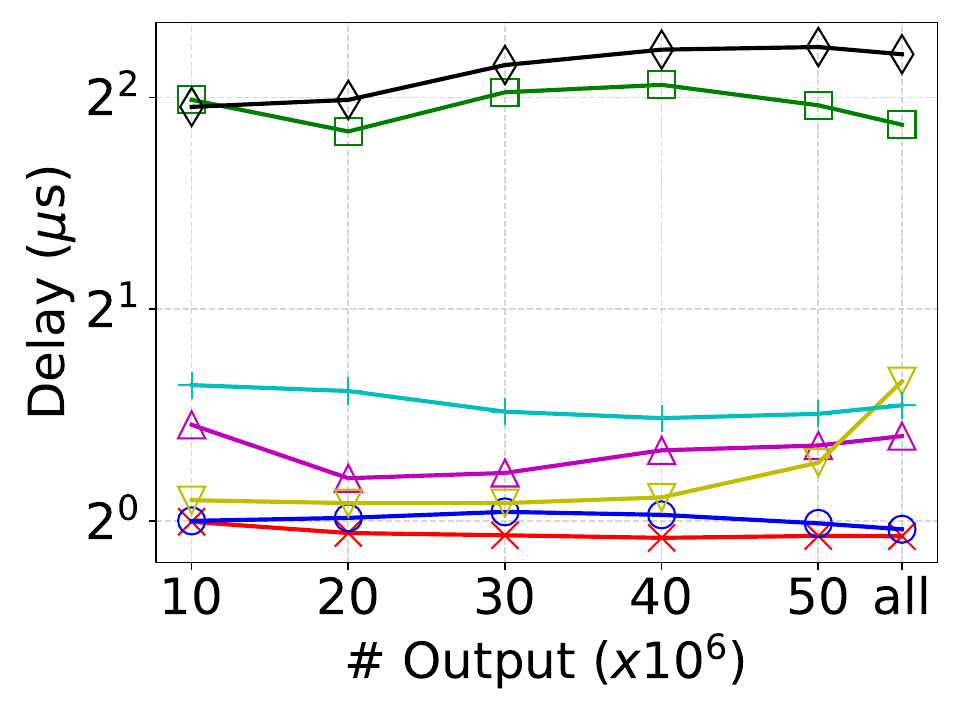}}
\subfigure[\textsf{BC}]{
    \includegraphics[width=0.23\textwidth]{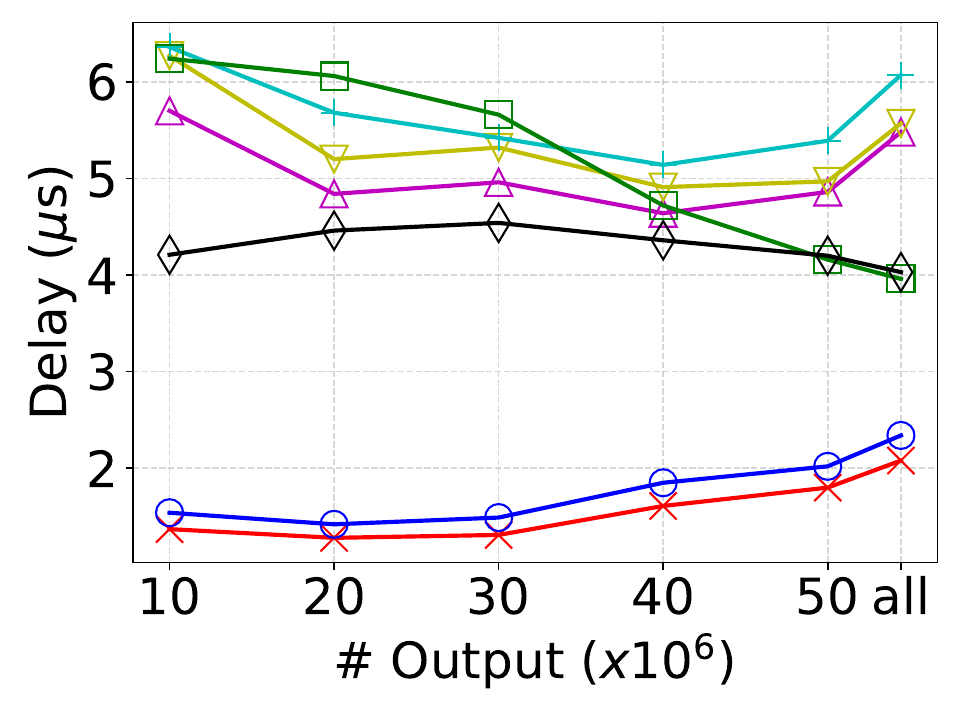}}
\subfigure[\textsf{TW}]{
    \includegraphics[width=0.23\textwidth]{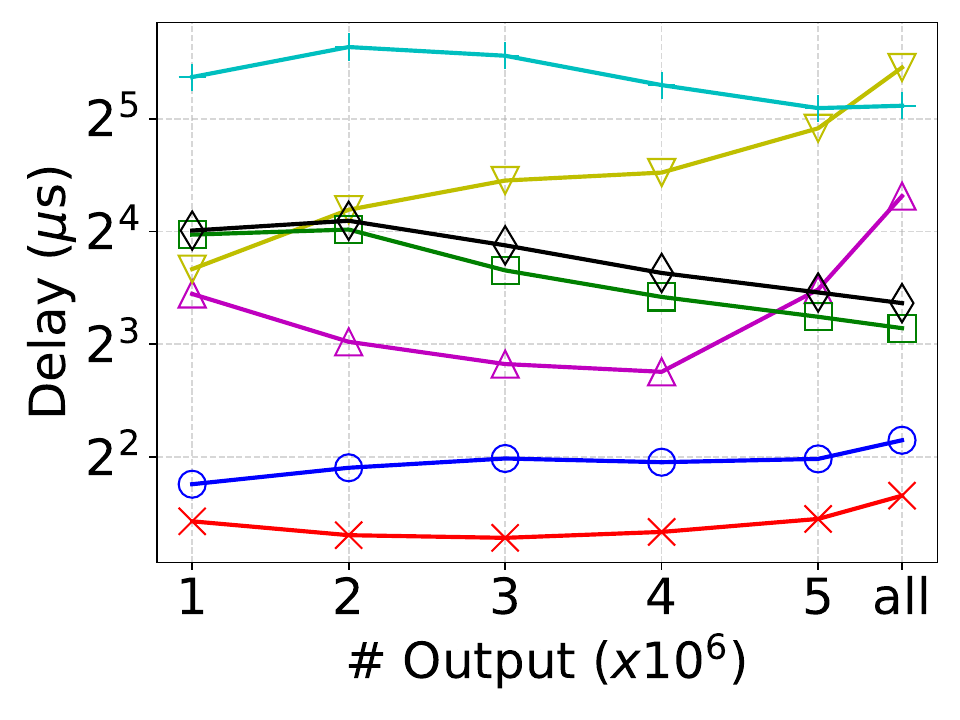}}
\subfigure[\textsf{DB}]{
    \includegraphics[width=0.23\textwidth]{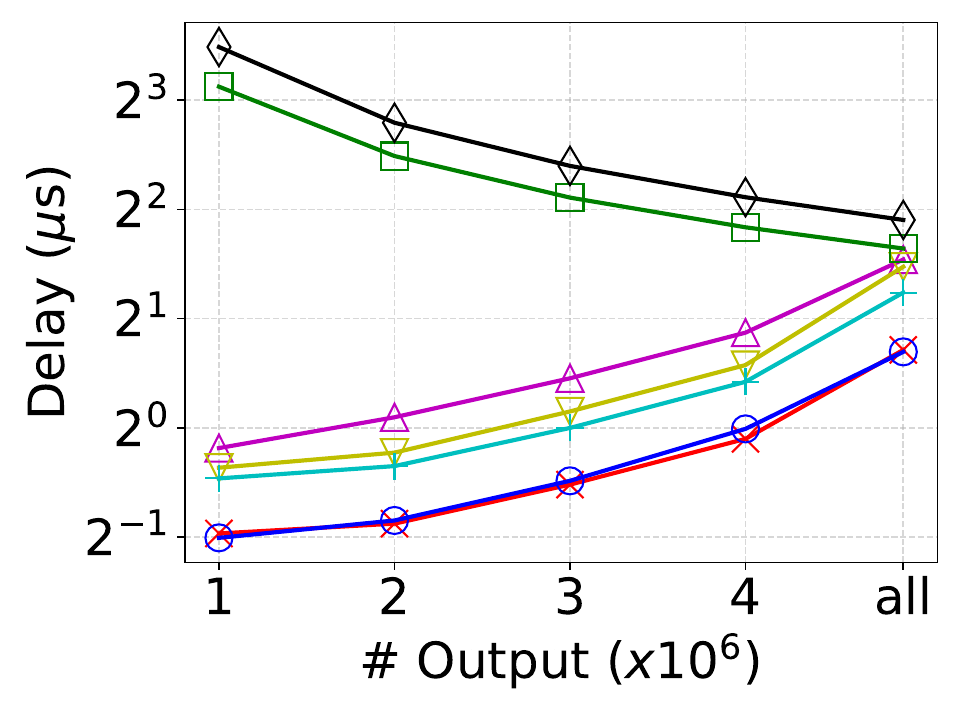}}
%\vspace{-2mm}
\caption{Results of delay testing.}
%\vspace{-4mm}
\label{fig:delay}
\end{figure}

\smallskip\noindent
\textbf{(1) Comparison among variants (real datasets).}
In addition to the baselines introduced in Section~\ref{subsec:setup}, we further include several BB algorithms to verify the effectiveness of each proposed technique: (1) \texttt{BCEA} \cite{dai2023hereditary} and (2) \texttt{BPS+}, \texttt{BPSD+} and \texttt{BPSH+} correspond to the algorithms \texttt{BCEA}, \texttt{BCEAD} and \texttt{BCEAH} with the conventional pivoting strategy and our proposed new stopping criterion. 
Table~\ref{tab:variant} shows the results of the comparison among the different variants of our algorithms.
% in which \texttt{BCEA} \cite{dai2023hereditary} is equivalent to \texttt{BPS} without \texttt{IE}. 
We have the following observations. 
\underline{First}, compare \texttt{BPS+} with \texttt{BCEA}. \texttt{BPS+} outperforms \texttt{BCEA} on all datasets, which shows the effectiveness of the new stopping criterion technique. The results are consistent if we compare \texttt{BPSD+} and \texttt{BCEAD} (resp. \texttt{BPSH+} and \texttt{BCEAH}). 
\underline{Second}, compare \texttt{IPS} with \texttt{BPS+}. \texttt{IPS} runs faster than \texttt{BPS+} on all datasets, which shows the effectiveness of the new pivoting strategy. We also collect the statistics of the produced branches. For example, on \texttt{YT} dataset, \texttt{BCEA} produces 8.28M branches. With the new stopping criterion technique, \texttt{BPS+} reduces the number of branches to 5.17M branches.
% , among which 41.5\% branches can be early-terminated. 
\texttt{IPS} further reduces the number of branches to 4.66M.
% , in which 48.9\% branches can be early-terminated. 
This result further confirms that the new pivoting strategy is more compatible with the proposed stopping criterion compared to the conventional pivoting strategy. Together, these two techniques significantly contribute to the observed efficiency improvements.
The results are also consistent if we compare \texttt{IPSD} and \texttt{BPSD+} (resp. \texttt{IPSH} and \texttt{BPSH+}). Notably, \texttt{IPSD} can achieve up to an order of magnitude speedup over \texttt{BCEAD} on \texttt{TW} dataset. 
\underline{Third}, compare \texttt{IPS} with \texttt{IPSD} (or \texttt{IPSH}). The latter ones alway runs faster than the former one, which verifies the effectiveness of the \texttt{IE} framework. Besides, \texttt{IPSH} and \texttt{IPSD} have comparable results on the majority of the datasets since they have the same worst-case time complexities, but differ with each other on some datasets such as \texttt{BC} and \texttt{TW}. 
The possible reasons are (1) generating degeneracy ordering and the unilateral ordering would incur different time costs, i.e., $O(m)$ for the former and $O(\Delta\cdot m)$ for the latter, respectively, and (2) the empirical time costs for enumeration mainly depends on the size of $|C|$ since the depth of the recursive tree is at most $|C|$. As indicated in \cite{chen2022efficient}, unilateral ordering would bring more benefits than degeneracy ordering, i.e., the maximum $|C_i|$ of a sub-branch $B_i$ produced from the initial branch is smaller. 
% Therefore, it is necessary to make a trade-off between two orderings. Our recommendation is that when $|E|$ is less than 10M, it is better to use unilateral ordering; otherwise, using degeneracy ordering is better. 

\begin{figure}[t]
%\vspace{-2mm}
\subfigure[\textsf{GH}]{
    \includegraphics[width=0.23\textwidth]{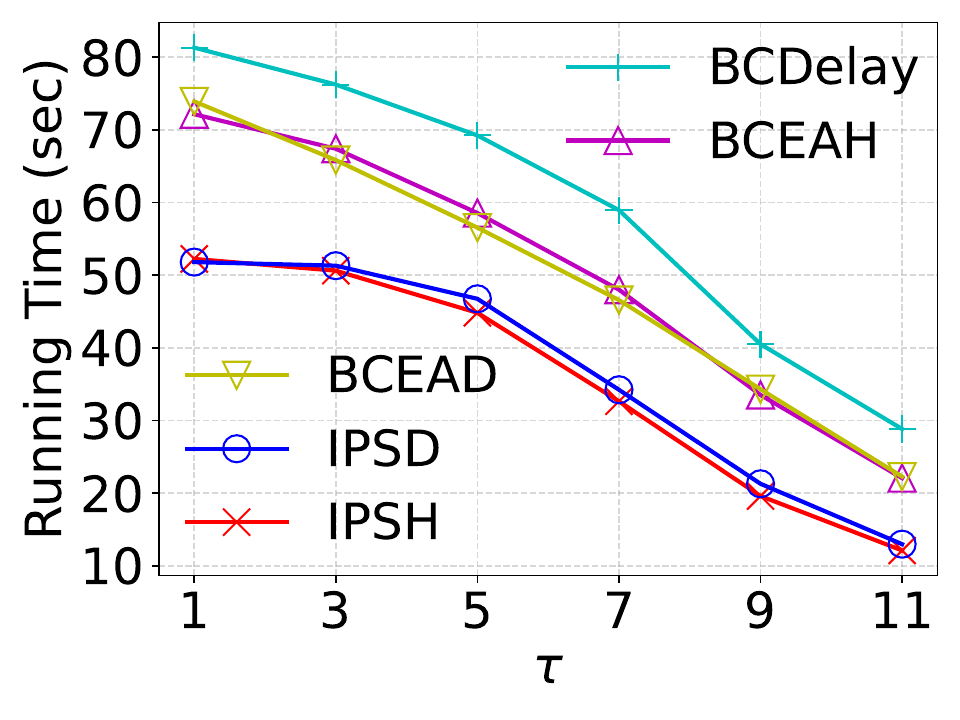}}
\subfigure[\textsf{BC}]{
    \includegraphics[width=0.23\textwidth]{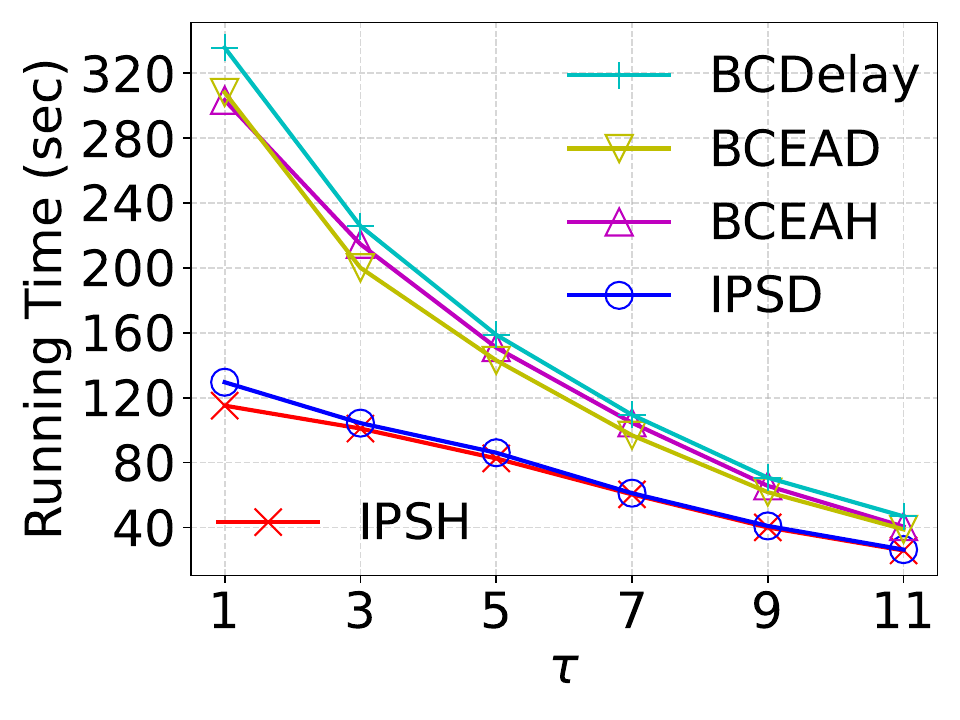}}
\subfigure[\textsf{TW}]{
    \includegraphics[width=0.23\textwidth]{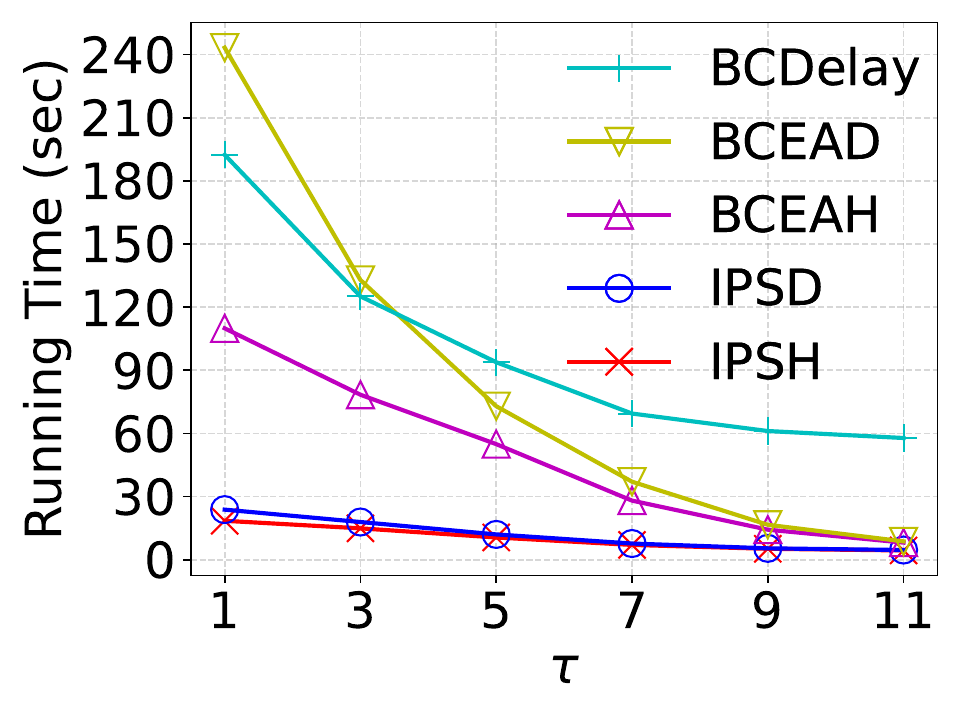}}
\subfigure[\textsf{DB}]{
    \includegraphics[width=0.23\textwidth]{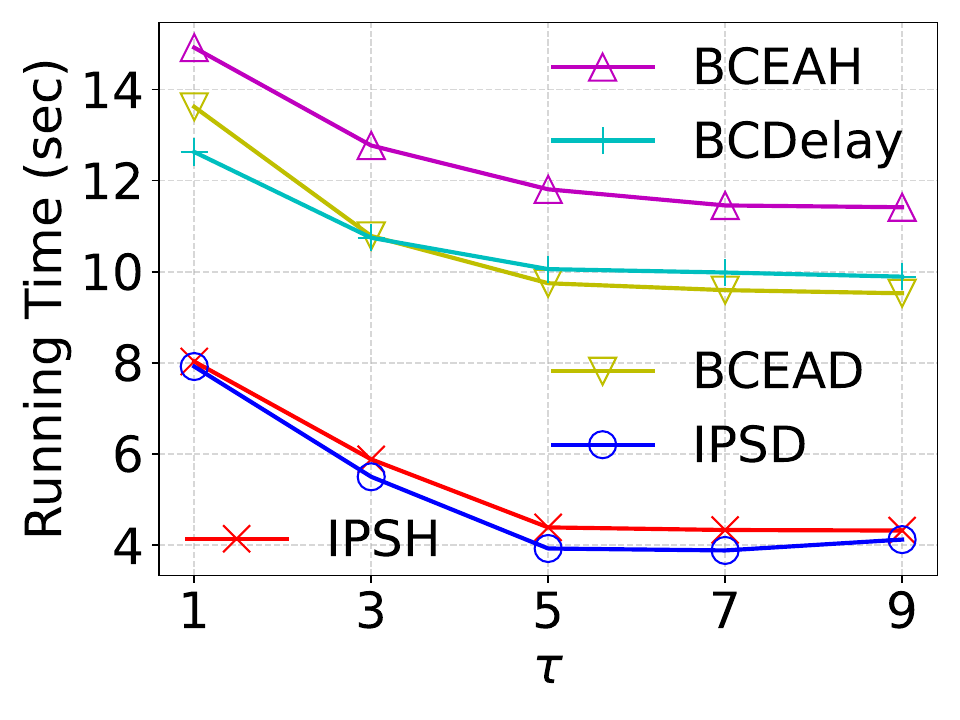}}
%\vspace{-2mm}
\caption{Comparison among baselines when enumerating size-constrained maximal bicliques (varying $\tau=\tau_L=\tau_R$).}
\vspace{-3mm}
\label{fig:tau-maximal}
\end{figure}

\smallskip\noindent
\textbf{(2) Comparison among all baselines when enumerating all maximal bicliques (real datasets).} 
Table~\ref{tab:baseline} shows the results of the comparison among all baselines when enumerating all maximal bicliques. We observe that our proposed algorithm \texttt{IPSH} and \texttt{IPSD} can run faster than other algorithms on most datasets. Specifically, for all datasets except \textsf{TV} and \textsf{LJ}, \texttt{IPSH} can achieve an average 3.2x (up to 7.7x) speedup and 1.7x (up to 3.8x) speedup over \texttt{BCEAH} and \texttt{MBET}, respectively. This is consistent with our theoretical analysis that the worst-case running time of \texttt{IPSH} and \texttt{IPSD} is smaller than that of other algorithms. Moreover, on the hard datasets (i.e., \textsf{TV} and \textsf{LJ}), our algorithm can largely outperform the existing algorithms by enumerating much more maximal bicliques than the other algorithms when all algorithms cannot finish the task within the time limit. On \textsf{BS}, our algorithms are not so efficient as \texttt{MBET}. The possible reason is that \texttt{MBET} uses the bitmap to compress the neighborhood information and compute the intersection by performing a bitwise AND operation. We do not utilized this data representation in our implementation, despite the potential to further enhance the acceleration of our algorithm.

\smallskip\noindent
\textbf{(3) Delay test for enumerating all maximal bicliques (real datasets).}
Following \cite{dai2023hereditary}, the delay is evaluated by dividing the running time for reporting the first $k$ maximal bicliques by $k$. Figure~\ref{fig:delay} shows the results of the delay testing by varying $k$. \texttt{IPSH} and \texttt{IPSD} have the smallest delays across different datasets and their delays do not change a lot when we vary the number of the outputs. The possible reason is that \texttt{IPSH} and \texttt{IPSD} can output the maximal bicliques in a batch fashion when the condition for the new stopping criterion is satisfied, i.e., the delay for the maximal bicliques in the same batch is $O(\gamma)$, which is the optimal since we need at least $O(\gamma)$ to output a maximal biclique. This compensates for the fact that the delay between two consecutive batches could be exponential. Moreover, \texttt{IPSH} and \texttt{IPSD} outperform other algorithms when outputting a fixed number of results, which indicates that our algorithms are also suitable for the applications that only require to output a fixed number of results.

\smallskip\noindent
\textbf{(4) Comparison among baselines when enumerating size-constrained maximal bicliques (real datasets).}
Figure~\ref{fig:tau-maximal} shows the result of varying $\tau=\tau_L=\tau_R$ for maximal biclique enumeration, respectively. As expected, our proposed \texttt{IPSH} and \texttt{IPSD} outperform other algorithms under most settings. Meanwhile, as the value of $\tau$ increases, the running time of all algorithms decreases. The reason is that the pruning rules become more powerful as $\tau$ increases and they can dramatically prune many unpromising branches so as to reduce the search space.

\begin{figure}[t]
\subfigure[Varying \% of vertices (\textsf{TV})]{
    \includegraphics[width=0.35\textwidth]{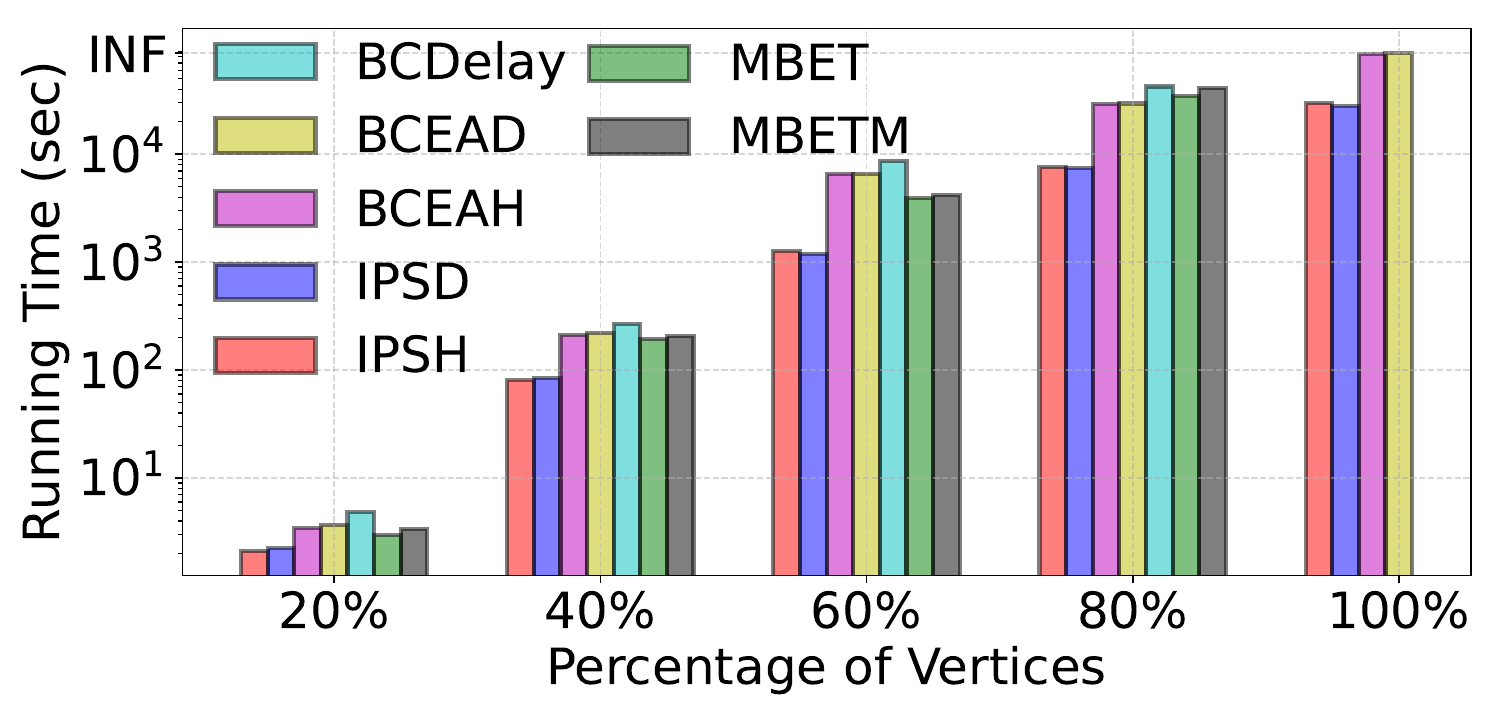}}
\subfigure[Varying \% of edges (\textsf{TV})]{
    \includegraphics[width=0.35\textwidth]{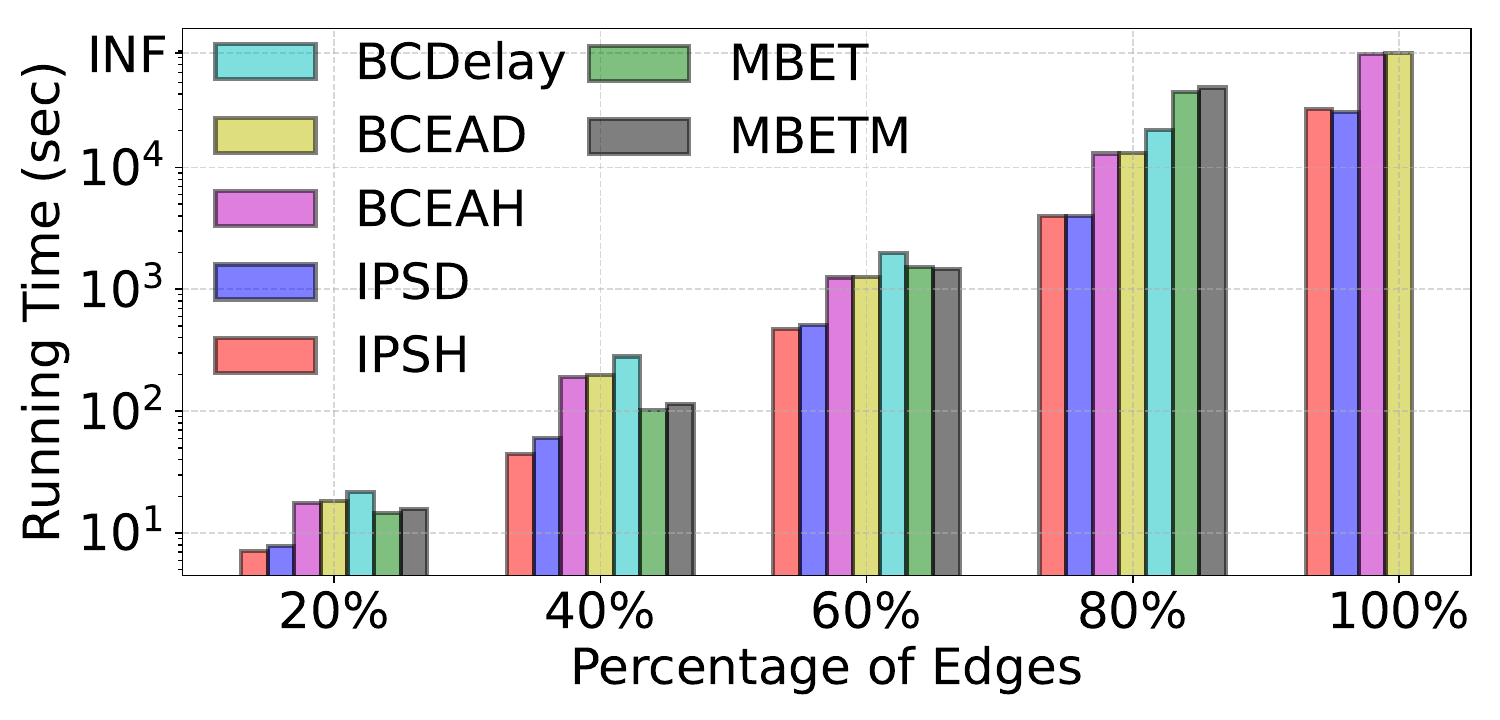}}
% \subfigure[Varying \% of vertices (\textsf{LJ})]{
%     \includegraphics[width=0.23\textwidth]{backup/figs/empty.png}}
% \subfigure[Varying \% of edges (\textsf{LJ})]{
%     \includegraphics[width=0.23\textwidth]{backup/figs/empty.png}}
%\vspace{-2mm}
\caption{Results of scalability testing on \textsf{TV} dataset.}
\vspace{-3mm}
\label{fig:scale}
\end{figure}

\begin{table}[t]
\small
    \centering
    \caption{Space costs (unit: MB by default). }
    %\vspace{-2mm}
    \label{tab:space}
    \begin{tabular}{r|ccccccc}
    \hline
       & \texttt{IPSH} & \texttt{IPSD} & \texttt{BCEAH} & \texttt{BCEAD} & \texttt{BCDelay} & \texttt{MBET} & \texttt{MBETM} \\
    \hline
    \textsf{FR} & 6.53 & 6.53 & \underline{6.40} & \textbf{6.40} & 6.40 & 7.82 & 7.56 \\ 
        \textsf{WK} & 10.58 & 10.59 & \underline{9.45} & \textbf{9.21} & 9.58 & 22.32 & 20.97 \\
        \textsf{ES} & 9.91 & 9.89 & 8.87 & \textbf{8.85} & \underline{8.86} & 16.59 & 15.68 \\
    \hline
        \textsf{YT} & 35.79 & 36.64 & \underline{28.33} & \textbf{28.25} & 28.40 & 180.18 & 114.19 \\
        \textsf{GH} & 47.57 & 48.16 & \underline{38.10} & \textbf{37.90} & 38.83 & 2.04G & 311.48 \\
        \textsf{BC} & 106.86 & 105.10 & \underline{90.27} & 90.60 & \textbf{89.97} & 4.29G & 869.92 \\
        \textsf{SO} & 152.56 & 160.99 & \textbf{113.57} & \underline{113.72} & 113.74 & 425.65 & 380.91 \\
        \textsf{AC} & 131.62 & 136.31 & \underline{99.94} & \textbf{99.86} & 100.01 & 371.86 & 328.82 \\
        \textsf{TW} & 179.06 & 203.48 & \underline{140.37} & \textbf{139.21} & 141.22 & 639.70 & 442.07 \\
        \textsf{BS} & 285.43 & 269.72 & \underline{197.57} & 197.73 & \textbf{197.14} & 680.96 & 625.87 \\
        \textsf{IM} & 317.99 & 329.87 & \underline{239.94} & \textbf{239.58} & 240.13 & 839.36 & 803.77 \\
        \textsf{AZ} & 691.27 & 679.07 & \textbf{512.69} & \underline{513.07} & 514.91 & 1.90G & 1.65G \\
        \textsf{DB} & 1.77G & 1.72G & \underline{1.12G} & \textbf{1.12G} & 1.12G & 3.56G & 3.78G \\
        \hline
    \end{tabular}
    \vspace{-3mm}
\end{table}

\smallskip\noindent
\textbf{(5) Scalability test (real datasets).} We conduct the scalability test on one hard dataset \textsf{TV}. Following existing studies \cite{dai2023hereditary, chen2023maximal}, we randomly sample 20\%-80\% vertices and edges from the dataset and build the induced subgraph. We show the results in Figure~\ref{fig:scale}. We have the following observations. First, as the number of the sampled vertices/edges increases, the running time of all algorithms increases, among which \texttt{IPSH} and \texttt{IPSD} run the fastest. Besides, our algorithm can enumerate all these 19 billion maximal bicliques in this hard dataset within INF while other algorithms cannot.

\smallskip\noindent
\textbf{(6) Space costs (real datasets).}
We report the space costs of different algorithms in Table~\ref{tab:space}. We have the following observations. First, our algorithms use slightly more memory than \texttt{BCEAD} and \texttt{BCEAH} to enumerate the maximal bicliques. The reason is that our algorithms need additional space for each branch $B=(S,C,X)$ to record the non-neighborhood information to check whether $v$ should be skipped. Second, our algorithms use much less memory than \texttt{MBET} and \texttt{MBETM}. The reason is that \texttt{MBET} and \texttt{MBETM} store and update the whole prefix tree in an in-memory fashion, which increases dramatically as the algorithm proceeds.

% \clearpage
\section{Related Work}
\label{sec:related}

% There are two categories of problems, namely the maximum biclique search problem and the maximal biclique enumeration problem, which are the most related topics to the problem studied in this paper. In addition, we also review several biclique-related problems.
% , e.g., $(p,q)$-biclique listing problem and top-$k$ diversified maximum biclique search problem. 
% We review each of them as follow. 

% , each of which is specified by parameters $\tau_L$ and $\tau_R$. These subproblems are derived , and in each subproblem, it restricts the search in a very small part of the original bipartite graph.

% is to execute a branch-and-bound algorithm multiple rounds, where the algorithm will find the maximum biclique in a reduced graph in each round. The more the round is, the tighter the bound is. 

% we divide the problem into several subproblems each of which
% is specified using two parameters. 

\smallskip\noindent
\textbf{Maximal Biclique Enumeration.} The maximal biclique enumeration problem has also been studied for decades. Earlier studies \cite{zaki2002charm, uno2004lcm, li2007maximal} enumerate all maximal bicliques in a breath-first search manner, where they first build the powerset of one side of the bipartite graph, e.g., $L$, and then extract those that induce the maximal bicliques. Recent studies \cite{makino2004new, gely2009enumeration, zhang2014finding, abidi2020pivot, chen2022efficient, dai2023hereditary, chen2023maximal} usually enumerates the maximal bicliques in a depth-first search manner as they follow the Bron-Kerbosch branch-and-bound framework \cite{bron1973algorithm}, which is originally designed for maximal clique enumeration problem. 
Specifically, \cite{makino2004new, gely2009enumeration} reduce the problem to the maximal clique enumeration problem by transforming the bipartite graph into a general graph.
\cite{zhang2014finding} is the first study that adapts the Bron-Kerbosch branching strategy to enumerate maximal biclique directly. Motivated by the phenomenon that the pivoting strategy \cite{tomita2006worst} can prune a significant number of branches, \cite{abidi2020pivot, dai2023hereditary} both employ the technique in biclique scenario but differ in which pivot is selected. \cite{chen2022efficient} aims to optimize the order of vertices, introducing the unilateral vertex ordering to accelerate the enumeration. \cite{chen2023maximal} further improves the efficiency and reduces the memory usage by using the prefix-tree to store the lists of elements compactly. However, all these branch-and-bound algorithms differ from \texttt{IPS} in that \texttt{IPS} involves a new pivoting strategy, which would theoretically reduce the number of created branches under the worst case, i.e., $O(\alpha^n)$ branches for \texttt{IPS} v.s. $O(\sqrt{2}^n)$ branches for all existing studies with $\alpha < \sqrt{2}$. This new pivot strategy also makes \texttt{IPS} and its variants have a strictly better time complexity than that of existing studies.

\smallskip\noindent
\textbf{Other Biclique-related Problems.} There are also several other interesting topics that involve the biclique inside. 
\underline{First}, several biclique enumeration tasks impose the size constraints. \cite{yang2021p} aims to find all $(p,q)$-bicliques in a graph, i.e., each returned biclique $H$ satisfies $|L(H)|=p$ and $|R(H)|=q$, where $p$ and $q$ are user-specified parameters. It extends the ideas for $k$-clique listing to this setting with several optimizations such as preallocated arrays and vertex renumbering. \cite{wang2019vertex, wang2023accelerated} target a special $(p,q)$-biclique with $p=q=2$, a.k.a, butterfly, in the bipartite graph and design specialized algorithm for this task. 
\underline{Second}, diversified top-$k$ search is another topic that has been extensively studied, which aims to find top-$k$ results that are not only most relevant to a query but also diversified. In the biclique enumeration setting,
\cite{lyu2022maximum} studies the top-$k$ diversified maximum biclique search, which aims to find $k$ bicliques such that they cover the most number of edges. It extends the idea for maximum biclique search \cite{lyu2020maximum} by iteratively identifying and removing the  maximum biclique in the remaining graph. \cite{abidi2022maximising} extends this problem with a size constraint $t$ for one side, i.e., $|L(H)|=t$, while maximizing coverage on the other side. 
\underline{Third}, the maximum biclique search problem has also gained significant research interest due to its wide range of applications in fields such as bioinformatics and social network analysis. There are three lines of research - (1) \cite{garey1979computers} targets vertex-based maximum biclique search problem, which aims to  find a biclique $H^*$ such that $|V(H^*)|$ is maximized, (2) \cite{lyu2020maximum, lyu2022maximum, shaham2016finding, sozdinler2018finding} target edge-based maximum biclique search problem, which aims to find a biclique $H^*$ such that $|E(H^*)|$ is maximized and (3) \cite{chen2021efficient, mccreesh2014exact, wang2018new, yuan2014new, zhou2017combining, zhou2018towards} target the maximum balanced biclique search problem, which aims to find a biclique $H^*$ such that $|L(H^*)|=|R(H^*)|$ and $|E(H^*)|$ is maximized.  
We note that the techniques used in the above studies are not suitable for our problem setting since (1) we do not impose an exact size constraint for the returned bicliques, (2) the bicliques returned can be overlapped for our problem and (3) we aim to find all maximal bicliques but not only the maximum one.

% \cite{shahinpour2017scale}
\section{Conclusion}
\label{sec:conclusion}

In this paper, we study the maximal biclique enumeration problem. We propose the algorithm \texttt{IPS}, which employs two techniques to improve performance. The first is a new termination condition, designed to find all maximal bicliques in a branch in nearly optimal time. The second is a new pivot selection strategy that aligns well with the early-termination technique. \texttt{IPS} achieves better worst-case time complexity than all existing algorithms for maximal biclique enumeration. Extensive experiments on several datasets demonstrate the efficiency of our algorithm and the effectiveness of the proposed techniques. An interesting research direction is to generalize the idea of the new pivoting strategy for clique-related problems, such as maximal clique enumeration, maximum clique search, and top-$k$ diversified clique search.

\begin{acks}
We would like to thank the anonymous reviewers for providing constructive feedback and valuable suggestions. This work was supported by the National Natural Science Foundation of China (Grant No. 62506018).
This work was also supported in part by the Ministry of Education, Singapore, under its Academic Research Fund (Tier 2 Award MOE-T2EP20224-0011 and Tier 1 Award (RG20/24)). Any opinions, findings and conclusions or recommendations expressed in this material are those of the author(s) and do not reflect the views of the Ministry of Education, Singapore.
\end{acks}

% \clearpage
% \balance
\bibliographystyle{ACM-Reference-Format}
\bibliography{ref}

% \clearpage
% \input{appendix}

% \clearpage

\end{document}